\documentclass[12pt]{article}
\usepackage{amssymb,amsthm,amsmath,latexsym}
\usepackage{url}
\usepackage{color}
\usepackage{comment}
\usepackage{lscape}
\newtheorem{thm}{Theorem}
\newtheorem{prop}[thm]{Proposition}
\newtheorem{lem}[thm]{Lemma}
\newtheorem{cor}[thm]{Corollary}

\theoremstyle{remark}
\newtheorem{rem}[thm]{Remark}
\newcommand{\FF}{\mathbb{F}}

\newcommand{\ww}{\omega}
\newcommand{\vv}{\overline{\omega}}
\newcommand{\cC}{\mathcal{C}}

\DeclareMathOperator{\wt}{wt}

\begin{document}
\title{A method for constructing quaternary Hermitian self-dual codes
and an application to quantum codes}

\author{
Masaaki Harada\thanks{
Research Center for Pure and Applied Mathematics,
Graduate School of Information Sciences,
Tohoku University, Sendai 980--8579, Japan.
email: \texttt{mharada@tohoku.ac.jp}.}
}

\maketitle

\begin{abstract}
We introduce quaternary modified four $\mu$-circulant codes
as a modification of four circulant codes.
We give basic properties of quaternary modified four $\mu$-circulant 
Hermitian self-dual codes.
We also construct quaternary modified  four $\mu$-circulant Hermitian self-dual codes
having large minimum weights.
Two quaternary  Hermitian self-dual $[56,28,16]$ codes are constructed
for the first time.
These codes improve the previously known lower bound on the largest minimum weight among
all quaternary (linear) $[56,28]$ codes.
In addition, these codes imply the existence of a quantum $[[56,0,16]]$ code. 
\end{abstract}

\section{Introduction}
Self-dual codes are one of the most interesting classes of  (linear)  codes.
This interest is justified by many combinatorial objects
and algebraic objects related to self-dual codes
(see e.g.,~\cite{SPLAG}, 
\cite{Hu05} and~\cite{RS-Handbook}).

Let $\FF_{q^2}$ denote the finite field of order ${q^2}$, where $q$ is a prime or a prime power.
A code $C$ over $\FF_{q^2}$ of length $n$ is said to be {Hermitian self-dual} if
$C=C^{\perp_H}$, where
the Hermitian dual code $C^{\perp_H}$ of $C$ is defined as
$C^{\perp_H}=\{x \in \FF_{q^2}^{n} \mid \langle x,y\rangle_H=0 \text{ for all } y\in C\}$
under the Hermitian inner product $\langle x,y\rangle_H$.
By the Gleason--Pierce theorem, there are nontrivial divisible Hermitian self-dual
codes over $\FF_{q^2}$ for $q=2$ only.
This is one of the reasons why
much work has been done concerning Hermitian self-dual codes over $\FF_4$
(see e.g.,~\cite{AH},
\cite{CP},
\cite{CPS},
\cite{GG},
\cite{G00},
\cite{GHM},
\cite{HLMT},
\cite{HM11},
\cite{Hu90},
\cite{Hu91},
\cite{KL},
\cite{K01},
\cite{LP},
\cite{MMS},
\cite{MOSW} and
\cite{Roberts}).
In this paper, we study Hermitian self-dual codes over $\FF_4$.

It is a fundamental and challenging problem in self-dual codes to classify self-dual codes and
determine the largest minimum weight among all self-dual codes for a fixed length.
A code over $\FF_4$ is called {quaternary}.
All quaternary Hermitian self-dual codes were classified 
in~\cite{CPS},~\cite{HLMT},~\cite{HM11} and~\cite{MOSW}
for lengths $n \le 20$.
Also, the largest minimum weight $d(n)$ among all Hermitian self-dual codes 
is determined for lengths $n \le 30$
(see~\cite[Table~5]{GG} for the current information on $d(n)$).

For small fields $\FF$, 
many four circulant  (negacirculant) self-dual codes over $\FF$
having large minimum weights are known
(see e.g.,~\cite{GL},
\cite{H20},
\cite{H23},
\cite{HHKK} and the references given therein).
In this paper, 
by modifying four circulant self-dual codes,
we give a method for constructing quaternary Hermitian self-dual  codes based on $\mu$-circulant
matrices, which are called modified four $\mu$-circulant codes.
Some basic properties of modified four $\mu$-circulant quaternary Hermitian self-dual  codes
are given.
We also  give numerical results of 
quaternary modified  four $\mu$-circulant Hermitian self-dual codes
together with an application to quantum codes.

This paper is organized as follows.
In Section~\ref{Sec:2}, we give some definitions, notations and basic results used in this paper.
In Section~\ref{Sec:Def}, we  define quaternary modified four $\mu$-circulant codes
as a certain modification of four circulant codes.
We also give basic properties of quaternary modified four $\mu$-circulant 
Hermitian self-dual codes.
In particular, we give a condition for quaternary modified four $\mu$-circulant codes
to be Hermitian self-dual.
In addition, we observe equivalences of quaternary modified  four $\mu$-circulant Hermitian self-dual codes.
In Section~\ref{Sec:N}, we present numerical results of 
quaternary modified  four $\mu$-circulant Hermitian self-dual codes.
By computer search based on  basic properties  presented in Section~\ref{Sec:Def},
we give a classification of 
quaternary modified four $\mu$-circulant Hermitian self-dual codes 
having the currently known largest minimum weights
for lengths $24,28,32$ and $36$ (Proposition~\ref{prop:classification}).
For larger lengths,
we also construct quaternary modified four $\mu$-circulant Hermitian self-dual codes 
having large minimum weights.
We emphasize that quaternary  Hermitian self-dual $[56,28,16]$ codes are constructed
for the first time  (Proposition~\ref{prop:56}).
These codes  $C_{56,1}$ and $C_{56,\ww}$ improve the previously known lower bounds on the largest minimum weight among
all quaternary (linear) $[56,28]$ codes (Corollary~\ref{cor:56}).
In Section~\ref{Sec:Q}, we give an application of $C_{56,1}$ and $C_{56,\ww}$
to quantum codes.  
More precisely, $C_{56,1}$ and $C_{56,\ww}$ imply
the existence of a quantum $[[56,0,16]]$ code.  

\section{Preliminaries}\label{Sec:2}

In this section, we give some definitions, notations
and basic results used in this paper.

\subsection{Quaternary codes}

We denote the finite field of order $4$
by $\FF_4=\{ 0,1,\ww , \vv  \}$, where $\vv= \omega^2 = \omega +1$.
A \emph{quaternary} linear $[n,k]$ \emph{code} $C$
is a $k$-dimensional vector subspace of $\FF_4^n$.
All codes in this paper are quaternary and linear unless otherwise noted, 
so we omit linear and we often omit quaternary. 
The parameter $n$ is called the \emph{length} of $C$.
A \emph{generator matrix} of $C$ is a $k \times n$
matrix such that the rows of the matrix generate $C$.
The \emph{weight} $\wt(x)$ of a vector $x \in \FF_4^n$ is
the number of non-zero components of $x$.
The \emph{weight enumerator} of $C$ is given by $\sum_{c \in C} y^{\wt(c)}$.
A vector of  $C$ is called a \emph{codeword} of $C$.
The minimum non-zero weight of all codewords in $C$ is called
the \emph{minimum weight} of $C$.  
A  \emph{quaternary  $[n,k,d]$ code}
is a  quaternary $[n,k]$ code with minimum weight $d$.

\subsection{Quaternary Hermitian self-dual codes}

The \emph{Hermitian dual} code $C^{\perp_H}$ of a quaternary
code $C$ of length $n$ is defined as
\[
C^{\perp_H}=
\{x \in \FF_{4}^n \mid \langle x,y\rangle_H = 0 \text{ for all } y \in C \},
\]
under the following Hermitian inner product
\[
\langle x,y\rangle_H= \sum_{i=1}^{n} x_i y_i^2
\]
for $x=(x_1,x_2,\ldots,x_n)$, $y=(y_1,y_2,\ldots,y_n)\in \FF_{4}^n$.
A quaternary code $C$ is said to be \emph{Hermitian self-dual} if
$C=C^{\perp_H}$.
All codewords of a quaternary Hermitian self-dual code have even weights~\cite[Theorem~1]{MOSW}.

All matrices in this paper are matrices over $\FF_4$, so we write simply matrices.
Throughout this paper,
let $I_n$ denote the identity matrix of order $n$,
and let
$A^T$ denote the transpose of a matrix $A$.
Moreover, let $\overline{A}$ denote the matrix $(a_{ij}^2)$ for a matrix $A=(a_{ij})$.
The following lemma is a criterion for Hermitian self-duality.
\begin{lem}[{\cite[Theorems~1 and 4]{MOSW}}]
\label{lem:SD}
Let $C$ be a quaternary $[2n,n]$ code with generator matrix
$
\left(
\begin{array}{cc}
I_{n} & M 
\end{array}
\right)$.
If $M \overline{M}^T=I_n$, then $C$ is Hermitian self-dual.
\end{lem}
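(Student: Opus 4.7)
The plan is to verify directly that the rows of the generator matrix $G = (I_n \mid M)$ are pairwise Hermitian orthogonal; once this is shown, the containment $C \subseteq C^{\perp_H}$ follows, and since $\dim C = n$ equals $\dim C^{\perp_H} = 2n - n = n$, we conclude $C = C^{\perp_H}$.

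To carry this out, I would compute $G \overline{G}^T$ block-wise. Writing the Hermitian inner product of the $i$th and $j$th rows of $G$ using the defining formula $\langle x,y\rangle_H = \sum_k x_k y_k^2$, and noting that squaring acts entrywise as the conjugation $a \mapsto \overline{a} = a^2$, the inner product of row $i$ and row $j$ equals $(I_n \overline{I_n}^T)_{ij} + (M \overline{M}^T)_{ij}$. The first term is $\delta_{ij}$ because the identity block has $0/1$ entries (so $\overline{I_n} = I_n$), and the second term is $\delta_{ij}$ by the hypothesis $M \overline{M}^T = I_n$. Adding in characteristic $2$ gives $2\delta_{ij} = 0$, so $G \overline{G}^T = 0$.

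Thus every pair of rows of $G$ is Hermitian-orthogonal, which means every pair of codewords of $C$ is Hermitian-orthogonal (by bilinearity/semilinearity of $\langle \cdot, \cdot \rangle_H$). Hence $C \subseteq C^{\perp_H}$, and the dimension count forces equality.

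There is essentially no obstacle here: the statement is a standard criterion, and the only care needed is to remember that over $\FF_4$ the Frobenius $a \mapsto a^2$ is the conjugation, so the Hermitian Gram matrix of $G$ is $G\overline{G}^T$ rather than $GG^T$. As long as this is tracked correctly, the argument is a one-line block calculation followed by a dimension comparison.
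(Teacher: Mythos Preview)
Your argument is correct and is exactly the standard verification: $G\overline{G}^T = I_n\overline{I_n}^T + M\overline{M}^T = I_n + I_n = 0$ over $\FF_4$, giving $C \subseteq C^{\perp_H}$, and then equality by dimension. The paper itself does not supply a proof of this lemma but merely cites it from~\cite{MOSW}, so there is no alternative approach to compare against; your proof is precisely what one would write to justify the statement.
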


It was shown in~\cite{MOSW} that the minimum weight $d$ of a quaternary Hermitian self-dual
code of length $n$ is bounded by:
\begin{equation}\label{eq:bound}
d \leq 2 \left\lfloor \frac{n}{6}\right\rfloor +2.
\end{equation}
A  quaternary Hermitian self-dual code of length $n$ and minimum weight
$2 \lfloor n/6 \rfloor +2$ is called \emph{extremal}.

Two quaternary  Hermitian self-dual  codes $C$ and $C'$ are \emph{equivalent} if there is a
monomial matrix $P$ over $\mathbb{F}_4$ with $C' = C \cdot P$,
where $C \cdot P = \{ x P\mid  x \in C\}$ (see~\cite{MOSW}).
Throughout this paper, two equivalent quaternary  Hermitian self-dual codes $C$ and $C'$ are denoted by $C \cong C'$.
All quaternary Hermitian self-dual codes were classified 
in~\cite{CPS},
\cite{HLMT},
\cite{HM11} and~\cite{MOSW}
for lengths up to $20$.
All extremal  quaternary Hermitian self-dual codes of length $22$ were also classified 
in~\cite{HM11}.

\section{Definition and basic properties of modified four $\mu$-circulant codes}
\label{Sec:Def}

In this section, we  define quaternary modified four $\mu$-circulant codes
and we give their basic properties.


An $n \times n$  matrix of the following form
\[
\left( \begin{array}{cccccc}
r_0&r_1&r_2& \cdots &r_{n-2} &r_{n-1}\\
\mu r_{n-1}&r_0&r_1& \cdots &r_{n-3}&r_{n-2} \\
\mu r_{n-2}&\mu r_{n-1}&r_0& \cdots &r_{n-4}&r_{n-3} \\
\vdots &\vdots & \vdots &&\vdots& \vdots\\
\vdots &\vdots & \vdots &&\vdots& \vdots\\
\mu r_1&\mu r_2&\mu r_3& \cdots&\mu r_{n-1}&r_0
\end{array}
\right)
\]
is called \emph{$\mu$-circulant}, where $\mu \in \{1,\ww,\vv\}$.
In particular, 
if $\mu =1$, then this is well-known as a \emph{circulant} matrix.
It is trivial that a $\mu$-circulant matrix with first row
$(r_0,r_1,\ldots, r_{n-1})$ is written as
$\sum_{i=0}^{n-1} r_i E_n(\mu)^i$, where
\[
E_n(\mu)=
\left( \begin{array}{cccccc}
0 &  &  &  \\
\vdots &  & I_{n-1} &  \\
0 & &  &  \\
\mu & 0 & \cdots & 0 \\
\end{array}
\right).
\]


\begin{lem}\label{lem:B:ABeqBA}
Suppose that $\mu \in \{1,\ww,\vv\}$.
\begin{enumerate}
\item
If $A$ and $B$ are $n \times n$ $\mu$-circulant matrices, then $AB=BA$.
\item
If $A$ is an $n \times n$ $\mu$-circulant matrix with first row
$(r_0,r_1,\ldots, r_{n-1})$, then $\overline{A}^T$ is a 
$\mu$-circulant matrix with first row
$(r_0^2, (\mu r_{n-1})^2,\ldots, (\mu r_1)^2)$.
\end{enumerate}
\end{lem}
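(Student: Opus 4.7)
The plan is to set $E := E_n(\mu)$ and to work with the polynomial description $A = \sum_{i=0}^{n-1} r_i E^i$ of a $\mu$-circulant matrix stated just before the lemma. For part~(i), I would first check that $E^n = \mu I_n$ by tracking how $E$ acts on the standard basis vectors: $E\, e_j = e_{j-1}$ for $j\ge 1$ and $E\,e_0 = \mu e_{n-1}$, so iterating $n$ times returns every $e_j$ to $\mu e_j$. Consequently the $n \times n$ $\mu$-circulant matrices are precisely the elements of the commutative $\FF_4$-subalgebra generated by $E$ (a homomorphic image of $\FF_4[x]/(x^n-\mu)$), and $AB = BA$ follows at once.

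For part~(ii) I would proceed by a direct entry comparison. Writing $A$ compactly as $A_{ij} = \mu^{\delta(i,j)} r_{(j-i)\bmod n}$, where $\delta(i,j)=1$ if $j<i$ and $\delta(i,j)=0$ otherwise, gives $(\overline{A}^T)_{ij} = A_{ji}^{2} = \mu^{2\delta(j,i)}\, r_{(i-j)\bmod n}^{2}$. I would then check that this equals the $(i,j)$ entry of the $\mu$-circulant whose first row is $(r_0^2,(\mu r_{n-1})^2,\dots,(\mu r_1)^2)$, namely $\mu^{\delta(i,j)} s_{(j-i)\bmod n}$ with $s_0 = r_0^2$ and $s_k = \mu^{2} r_{n-k}^{2}$ for $1\le k\le n-1$. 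Splitting into the three cases $j=i$, $j>i$, and $j<i$ then reduces the claim to routine substitution.

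The one place where something has to be observed, and hence the only real subtle point, is the case $j<i$. There the left-hand side yields $r_{i-j}^{2}$ with no $\mu$-factor, while the right-hand side evaluates to $\mu\cdot s_{n-i+j} = \mu\cdot\mu^{2}r_{i-j}^{2} = \mu^{3}r_{i-j}^{2}$. The two sides agree because $\mu^3 = 1$ for every $\mu \in \{1,\omega,\overline{\omega}\}$. It is this cube-root-of-unity identity, together with the Frobenius squaring $x \mapsto x^2$, that explains why $\overline{A}^T$ (rather than $A^T$ itself) is the natural $\mu$-circulant companion of $A$ — a point that will be essential when Lemma~\ref{lem:SD} is applied to $M\overline{M}^T$ later on.
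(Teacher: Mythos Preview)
Your proof is correct. For part~(i) you follow exactly the paper's route: both arguments rest on writing a $\mu$-circulant matrix as $\sum_i r_i E_n(\mu)^i$ and observing that polynomials in a single matrix commute. Your extra remark that $E_n(\mu)^n=\mu I_n$ (so the algebra is a quotient of $\FF_4[x]/(x^n-\mu)$) is not needed for the commutativity statement itself, but it is correct and harmless.

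For part~(ii) the approaches diverge. The paper stays within the polynomial formalism, simply asserting that $\overline{A}^T = r_0^2 I_n + \sum_{i=1}^{n-1} (\mu r_{n-i})^2 E_n(\mu)^i$; the reader is left to unpack why conjugate-transposing $\sum_i r_i E_n(\mu)^i$ yields this expression. You instead fix indices and compare entries directly, splitting into the cases $j=i$, $j>i$, $j<i$. Your computation is correct, and it has the virtue of making explicit the one nontrivial ingredient, namely $\mu^3=1$ in $\FF_4^*$, which in the paper's version is hidden inside the unproved identity for $\overline{E_n(\mu)}^T$ in terms of powers of $E_n(\mu)$. The paper's formulation is tidier and immediately exhibits $\overline{A}^T$ as a polynomial in $E_n(\mu)$ (hence $\mu$-circulant), whereas your argument is more elementary and self-contained. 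Either is perfectly adequate for this lemma.
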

\begin{proof}
The assertion (i) follows from the fact that a  $\mu$-circulant matrix with first row
$(r_0, r_1,\ldots,r_{n-1})$ is written as 
$\sum_{i=0}^{n-1} r_i E_n(\mu)^i$.
The assertion (ii) follows from the fact that 
$\overline{A}^T$ is written as 
$r_0^2I_n + \sum_{i=1}^{n-1} (\mu r_{n-i})^2 E_n(\mu)^i$.
\end{proof}


By modifying four circulant self-dual codes 
(see e.g.,~\cite{HHKK} for the definition), 
we introduce the following method for constructing quaternary Hermitian self-dual
codes.
Suppose that $\mu \in \{1,\ww,\vv\}$.
Let $A$ and $B$ be $n \times n$ $\mu$-circulant matrices.
We say that a quaternary $[4n,2n]$ code with generator matrix of the following form
\begin{equation}\label{eq:const}
\left(
\begin{array}{ccc@{}c}
\quad & {\Large I_{2n}} & \quad &
\begin{array}{cc}
A & B \\
\overline{B}^T & \overline{A}^T
\end{array}
\end{array}
\right)
\end{equation}
is \emph{modified four $\mu$-circulant}.
A modified four $1$-circulant code is also called \emph{modified four circulant}.
We denote the code with generator matrix~\eqref{eq:const} by $C_{\mu}(A,B)$.

\begin{rem}
As a different modification of four circulant codes,
codes with generator matrices of the following form
\[
\left(
\begin{array}{ccc@{}c}
\quad & {\Large I_{2n}} & \quad &
\begin{array}{cc}
A^TCJ & \overline{B} \\
B^TCJ &  \overline{A}
\end{array}
\end{array}
\right)
\]
are given in~\cite{Roberts},
where $A,B$ and $C$ are circulant matrices and $J$ is the exchange matrix.
\end{rem}

Now we give some basic properties of modified  four $\mu$-circulant
 Hermitian self-dual codes.
Although the following lemmas are somewhat trivial, we give proofs for the sake of completeness.

\begin{lem}\label{lem:B:SD}
Suppose that $\mu \in \{1,\ww,\vv\}$.
A quaternary modified four $\mu$-circulant code $C_{\mu}(A,B)$
is Hermitian self-dual
if $A \overline{A}^T +B \overline{B}^T =I_n$.
\end{lem}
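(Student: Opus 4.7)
The plan is to apply Lemma~\ref{lem:SD} to the generator matrix~\eqref{eq:const}, which reduces the claim to verifying that
\[
M := \begin{pmatrix} A & B \\ \overline{B}^T & \overline{A}^T \end{pmatrix}
\]
satisfies $M \overline{M}^T = I_{2n}$.

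First I would compute $\overline{M}^T$. Since squaring is an involution on $\FF_4$, the operation $X \mapsto \overline{X}$ commutes with transposition and satisfies $\overline{\overline{X}} = X$ entrywise, which gives
\[
\overline{M}^T = \begin{pmatrix} \overline{A}^T & B \\ \overline{B}^T & A \end{pmatrix}.
\]
Then I would expand the block product $M \overline{M}^T$ and examine its four $n \times n$ blocks separately. The $(1,1)$ block is $A\overline{A}^T + B\overline{B}^T$, which equals $I_n$ directly from the hypothesis. The $(2,2)$ block is $\overline{A}^T A + \overline{B}^T B$; here I would invoke Lemma~\ref{lem:B:ABeqBA}(ii) to note that $\overline{A}^T$ and $\overline{B}^T$ are themselves $\mu$-circulant, and then Lemma~\ref{lem:B:ABeqBA}(i) to conclude $\overline{A}^T A = A \overline{A}^T$ and $\overline{B}^T B = B\overline{B}^T$, reducing it again to the hypothesis.

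For the two off-diagonal blocks, the $(1,2)$ block is $AB + BA$ and the $(2,1)$ block is $\overline{B}^T \overline{A}^T + \overline{A}^T \overline{B}^T$. Both vanish by the same recipe: each is a sum of two products of $\mu$-circulant matrices in opposite orders, so Lemma~\ref{lem:B:ABeqBA}(i) makes the two summands equal, and characteristic $2$ kills the sum.

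There is no real obstacle; the only point requiring care is keeping the combination $\overline{\cdot}$ and $(\cdot)^T$ consistent, which is exactly what Lemma~\ref{lem:B:ABeqBA} is tailored for. In particular, part (ii) of that lemma is the essential input: without knowing that $\overline{A}^T$ and $\overline{B}^T$ remain $\mu$-circulant, we could not invoke part (i) to collapse the off-diagonal blocks or to reduce the $(2,2)$ block to the hypothesis.
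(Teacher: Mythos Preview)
Your proposal is correct and follows essentially the same approach as the paper: compute $M\overline{M}^T$ block by block, use Lemma~\ref{lem:B:ABeqBA}(ii) to see that $\overline{A}^T$ and $\overline{B}^T$ are $\mu$-circulant, and then invoke Lemma~\ref{lem:B:ABeqBA}(i) together with characteristic~$2$ to reduce each block either to the hypothesis or to zero. The paper's write-up is just a slightly terser version of what you have outlined.
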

\begin{proof}
By Lemma~\ref{lem:B:ABeqBA} (i), $AB+BA=O_n$, where
$O_n$ denotes the $n \times n$ zero matrix.
By Lemma~\ref{lem:B:ABeqBA} (ii), $\overline{A}^T$ and
$\overline{B}^T$ are $\mu$-circulant.
Again by Lemma~\ref{lem:B:ABeqBA} (i), 
$\overline{A}^T \overline{B}^T=\overline{B}^T\overline{A}^T$
and $A\overline{A}^T =\overline{A}^TA$.
Thus, we have 
\[
\overline{A}^T\overline{B}^T+\overline{B}^T\overline{A}^T =O_n
\text{ and } 
\overline{A}^TA + \overline{B}^TB=I_n.
\]
Let  $M(A,B)$ denote the $2n \times 2n$ matrix
$
\left(
\begin{array}{cc}
A & B \\
\overline{B}^T & \overline{A}^T
\end{array}
\right)$.
Then we have
\[
M(A,B) \overline{M(A,B)}^T=
\left(
\begin{array}{cc}
A \overline{A}^T +B \overline{B}^T & AB+BA \\
\overline{A}^T\overline{B}^T+\overline{B}^T\overline{A}^T 
& \overline{A}^TA + \overline{B}^TB
\end{array}
\right)
=I_{2n}.
\]
The result follows from Lemma~\ref{lem:SD}.
\end{proof}

\begin{lem}\label{lem:B:equiv}
Suppose that $C_{\mu}(A,B)$ is a quaternary modified four $\mu$-circulant Hermitian self-dual 
code, where $\mu \in \{1,\ww,\vv\}$.
Then the following statements hold.
\begin{enumerate}
\item
$C_{\mu}(A,B) \cong C_{\mu}(\ww A,\ww B) \cong C_{\mu}(\vv A,\vv B)$.
\item
$C_{\mu}(A,B) \cong C_{\mu}(B,A)$.
\item
$C_{\mu}(A,B) \cong C_{\mu}(\overline{A}^T, \overline{B}^T)$.
\item
$C_{\mu}(A,B) \cong C_{\mu}(A,\overline{B}^T)$.
\end{enumerate}
\end{lem}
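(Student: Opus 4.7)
The plan is to realize each of the four equivalences by exhibiting an explicit monomial matrix $P$ together with row operations (which do not change the code) that transform the generator $(I_{2n}\mid M(A,B))$ of $C_\mu(A,B)$ into the standard-form generator of the target code. I will use freely that $\mu$-circulant matrices pairwise commute (Lemma~\ref{lem:B:ABeqBA}(i)) and that $\overline{X}^T$ is $\mu$-circulant whenever $X$ is (Lemma~\ref{lem:B:ABeqBA}(ii)).

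For (i), I would take $P=\mathrm{diag}(I_n,\vv I_n,\ww I_n,\ww I_n)$; after rescaling the bottom block of rows by $\ww$ to restore the left-hand identity, the right half becomes $M(\ww A,\ww B)$ (using $\overline{\ww}=\vv$), and the $\vv$-case is handled symmetrically by $P=\mathrm{diag}(I_n,\ww I_n,\vv I_n,\vv I_n)$. For (ii), the cleanest choice is for $P$ to be the permutation interchanging columns $2n+i$ and $3n+i$ for every $i$; the right half then becomes $M(B,A)$ with no row operation needed. For (iii), I would first swap the top and bottom blocks of rows (a row operation, so the code does not change), then take $P$ to be the permutation swapping column blocks $1\leftrightarrow 2$ and $3\leftrightarrow 4$; a routine expansion shows that the right half becomes $M(\overline{A}^T,\overline{B}^T)$.

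Part (iv) is the main obstacle, because no permutation of columns within the right half alone can turn $B$ into $\overline{B}^T$ (the latter involves coordinate-wise Frobenius). My plan is to take $P$ to be the permutation exchanging column $i$ with column $2n+i$ for every $i\in\{1,\dots,2n\}$, i.e., the complete swap of the left and right halves. Applying $P$ converts the generator into $(M(A,B)\mid I_{2n})$. Since $C_\mu(A,B)$ is Hermitian self-dual, Lemma~\ref{lem:SD} gives $M(A,B)\,\overline{M(A,B)}^T=I_{2n}$, so $M(A,B)^{-1}=\overline{M(A,B)}^T$; left-multiplying by this inverse puts the generator in standard form $(I_{2n}\mid \overline{M(A,B)}^T)$. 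A block computation, using that Frobenius and transpose commute and that $\overline{\overline{X}}=X$ over $\FF_4$, gives
\[
\overline{M(A,B)}^T=\begin{pmatrix}\overline{A}^T & B\\ \overline{B}^T & A\end{pmatrix}=M(\overline{A}^T,B),
\]
so this step yields $C_\mu(A,B)\cong C_\mu(\overline{A}^T,B)$. Finally, applying (iii) to the pair $(\overline{A}^T,B)$ and noting $\overline{\overline{A}^T}^T=A$ then gives $C_\mu(\overline{A}^T,B)\cong C_\mu(A,\overline{B}^T)$, completing (iv). The crux is the use of the self-dual identity $M\overline{M}^T=I$ to turn the nonstandard form $(M\mid I)$ back into standard form; once that is done, everything else is routine bookkeeping with commuting $\mu$-circulant blocks.
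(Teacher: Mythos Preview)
Your proof is correct and follows essentially the same route as the paper. In particular, for (iv) the paper observes that $C_\mu(A,B)^{\perp_H}$ has generator matrix $\bigl(\overline{M(A,B)}^T \mid I_{2n}\bigr)$ and then uses self-duality to conclude this also generates $C_\mu(A,B)$, which is exactly your step of swapping halves and reducing via $M^{-1}=\overline{M}^T$; both arguments then finish by identifying $\overline{M(A,B)}^T=M(\overline{A}^T,B)$ and invoking (iii).
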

\begin{proof}
The assertions (i), (ii) and (iii) are trivial.
The Hermitian dual code
$C_{\mu}(A,B)^{\perp_H}$ of $C_{\mu}(A,B)$
has the following generator matrix
\[
\left(
\begin{array}{cccc}
\begin{array}{cc}
\overline{A}^T & B \\
\overline{B}^T & A
\end{array}
\quad & {\Large I_{2n}} & \
\end{array}
\right).
\]
Since $C_{\mu}(A,B)=C_{\mu}(A,B)^{\perp_H}$,
the above matrix is also a generator matrix of $C_{\mu}(A,B)$.
It follows from (iii) that $C_{\mu}(A,B) \cong C_{\mu}(A,\overline{B}^T)$.
\end{proof}

\begin{lem}\label{lem:B:1}
Let $C$ be a quaternary modified four $\mu$-circulant Hermitian self-dual code, where $\mu \in \{1,\ww,\vv\}$.
Then there is a quaternary modified four $\mu$-circulant Hermitian self-dual code $C_{\mu}(A,B)$ 
such that $C \cong C_{\mu}(A,B)$ and the first 
nonzero coordinate of the first row of $A$ is $1$.
\end{lem}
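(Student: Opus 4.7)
The plan is to start from any presentation $C = C_{\mu}(A_0,B_0)$ (which exists because $C$ is modified four $\mu$-circulant by hypothesis) and then use the equivalences listed in Lemma~\ref{lem:B:equiv} to adjust the first row of the left $\mu$-circulant block until its leading nonzero entry equals $1$. The key observation is that the scalars $1,\ww,\vv$ form a multiplicative group under which Lemma~\ref{lem:B:equiv}(i) is closed, so we can rescale $(A_0,B_0)$ by any element of $\FF_4^{\times}$ without leaving the equivalence class of $C$.

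First, I would rule out the degenerate case $A_0=O_n$. If $A_0=O_n$, then since $C$ is Hermitian self-dual, $B_0$ cannot also be $O_n$ (otherwise the right-hand block would be zero and $C$ would not equal its Hermitian dual), so Lemma~\ref{lem:B:equiv}(ii) allows us to replace $(A_0,B_0)$ by $(B_0,A_0)$ and thereby assume $A_0\neq O_n$. Next, let $c\in\{1,\ww,\vv\}$ be the first nonzero coordinate of the first row of $A_0$, and set
\[
A = c^{-1}A_0,\qquad B = c^{-1}B_0.
\]
Since $A_0$ and $B_0$ are $\mu$-circulant and $\mu$-circulants are closed under scalar multiplication (both facts are immediate from the representation $\sum_{i=0}^{n-1} r_i E_n(\mu)^i$), both $A$ and $B$ are $\mu$-circulant, so $C_{\mu}(A,B)$ is a well-defined modified four $\mu$-circulant code. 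Moreover the first nonzero entry of the first row of $A$ is $c^{-1}c=1$.

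It remains to verify that $C\cong C_{\mu}(A,B)$. Since $c^{-1}\in\{1,\ww,\vv\}$, this follows directly from Lemma~\ref{lem:B:equiv}(i): if $c^{-1}=1$ there is nothing to do, if $c^{-1}=\ww$ apply the first equivalence in (i), and if $c^{-1}=\vv$ apply the second.

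There is no genuine obstacle here; the only mild subtlety is confirming that $A_0$ can be taken nonzero, which uses the self-duality of $C$ together with the swap equivalence (ii). Everything else is a one-line invocation of Lemma~\ref{lem:B:equiv}(i) once the correct scalar $c^{-1}$ is identified.
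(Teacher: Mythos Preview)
Your argument is correct and follows the same approach as the paper: both invoke Lemma~\ref{lem:B:equiv}(i) to rescale $(A_0,B_0)$ by the appropriate nonzero scalar so that the leading nonzero entry of $A$ becomes $1$. Your treatment is in fact slightly more careful than the paper's, since you explicitly dispose of the degenerate case $A_0=O_n$ via the swap in Lemma~\ref{lem:B:equiv}(ii), whereas the paper's proof tacitly assumes the first row of $A'$ already has a nonzero entry.
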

\begin{proof}
Suppose that $C=C_{\mu}(A',B')$ and the first 
nonzero coordinate of the first row of $A'$ is $\ww$ (resp.\ $\vv$).
Then $C_{\mu}(\vv A', \vv B')$  (resp.\ $C_{\mu}(\ww A', \ww B')$) is a 
modified four $\mu$-circulant code 
such that nonzero coordinate of the first row of $\vv A'$ (resp.\ $\ww A'$)
is $1$.
By Lemma~\ref{lem:B:equiv} (i), we have that
$C \cong C_{\mu}(\vv A', \vv B')$ (resp.\ $C \cong C_{\mu}(\ww A', \ww B')$).
The result follows.
\end{proof}

The above lemma substantially reduces the number of codes which need be checked
when a classification of  modified  four $\mu$-circulant 
Hermitian self-dual codes is completed and the largest minimum weight among 
all modified  four $\mu$-circulant Hermitian self-dual codes is determined in the next section.

\section{Numerical results of modified four $\mu$-circulant Hermitian self-dual codes}
\label{Sec:N}

In this section, we present numerical results of 
quaternary modified  four $\mu$-circulant Hermitian self-dual codes.
We emphasize that Hermitian self-dual $[56,28,16]$ codes are constructed.
These codes are the first examples of not only
Hermitian self-dual $[56,28,16]$ codes but also (linear) $[56,28,16]$ codes.
All computer calculations in this section were done using programs in 
\textsc{Magma}~\cite{Magma} unless otherwise specified.

\subsection{Classification of modified  four $\mu$-circulant  Hermitian self-dual codes}

As described in Section~\ref{Sec:2},
all quaternary Hermitian self-dual codes of lengths up to $20$ were classified in~\cite{CPS},
\cite{HLMT},
\cite{HM11} and~\cite{MOSW}.
From now on, we consider Hermitian self-dual codes  for only lengths $n \ge 24$.

Let $d(n)$ denote  the largest minimum weight among all Hermitian self-dual codes of length $n$.
Let $d^K(n)$ denote  the largest minimum weight among previously known 
Hermitian self-dual codes of length $n$.
For $n \in \{24,28,\ldots,80\}$, 
the values $d^K(n)$ are listed in Table~\ref{Tab:dK},
noting that $d(24)=8$ and $d(28)=10$ (see~\cite[Table~5]{GG}).

\begin{table}[thb]
\caption{Values $d^K(n)$}
\label{Tab:dK}
\centering
\medskip
{\small
\begin{tabular}{c|c|c|c|c|c|c|c}
\noalign{\hrule height1pt}
$n$ & $24$ &$28,32$ &$36,40,44$ &$48,52,56$ &$60,64$ &$68,72,76$ & $80$\\
 \hline
$d^K(n)$&   $8$& $10$&  $12$& $14$&  $16$&  $18$ & $20$\\
\noalign{\hrule height1pt}
\end{tabular}
}
\end{table}

Here we give a classification of 
modified four $\mu$-circulant Hermitian self-dual codes having minimum weight $d^K(n)$ 
for length $n\in \{24,28,32,36\}$.  
We describe how to complete our classification briefly.
Our exhaustive computer search based on Lemmas~\ref{lem:B:SD} and~\ref{lem:B:1}
found all distinct generator matrices~\eqref{eq:const} of 
modified four $\mu$-circulant Hermitian self-dual $[n,n/2,d^K(n)]$ codes $C_{\mu}(A,B)$,
which must be checked further for equivalences.
To test equivalence of two modified four $\mu$-circulant Hermitian self-dual $[n,n/2,d^K(n)]$ codes,
we used \textsc{Magma} function \texttt{IsIsomorphic}.
Moreover, in the process of finding these codes, 
we verified that there is no modified four $\mu$-circulant Hermitian self-dual 
code of length $n$ and minimum weight $d >d^K(n)$ for lengths $n=32$ and $36$. 
Then we have the following proposition.

\begin{prop}\label{prop:classification}
\begin{enumerate}
\item
Up to equivalence, 
there are $7$ quaternary modified four circulant Hermitian self-dual $[24,12,8]$ codes.
Up to equivalence, 
there are $9$  quaternary modified four $\mu$-circulant Hermitian self-dual $[24,12,8]$ codes
for $\mu \in \{\ww,\vv\}$.

\item
Up to equivalence, 
there are $3$ quaternary modified four $\mu$-circulant extremal Hermitian self-dual $[28,14,10]$ codes
for $\mu \in \{1,\ww,\vv\}$.

\item
Up to equivalence, 
there are $59$ quaternary modified four $\mu$-circulant Hermitian self-dual $[32,16,10]$ codes
for $\mu \in \{1,\ww,\vv\}$.
If $d \ge 12$, then there is no quaternary modified four $\mu$-circulant Hermitian self-dual $[32,16,d]$
code for $\mu \in \{1,\ww,\vv\}$.

\item
Up to equivalence, there is a unique  quaternary
modified four $\mu$-circulant Hermitian self-dual $[36,18,12]$ code
for $\mu \in \{1,\ww,\vv\}$.
If $d\ge 14$, then there is no quaternary modified four $\mu$-circulant Hermitian self-dual $[36,18,d]$
code for $\mu \in \{1,\ww,\vv\}$.
\end{enumerate}
\end{prop}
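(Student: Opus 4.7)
The plan is to carry out, for each of the four lengths $n \in \{24,28,32,36\}$ and each $\mu \in \{1,\omega,\overline{\omega}\}$, a finite but large exhaustive computer search over pairs $(A,B)$ of $(n/4)\times(n/4)$ $\mu$-circulant matrices satisfying the Hermitian self-duality condition of Lemma~\ref{lem:B:SD}, namely $A\overline{A}^T+B\overline{B}^T=I_{n/4}$. Since a $\mu$-circulant matrix is determined by its first row, this is a search over pairs of vectors in $\FF_4^{n/4}$, whose raw size is $4^{n/2}$ (e.g., $4^{18}\approx 6.9\cdot 10^{10}$ when $n=36$). The first reduction is Lemma~\ref{lem:B:1}: we may restrict to $A$ whose first nonzero entry of its first row is $1$, which cuts the count by roughly a factor of $3$. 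Further reductions come from Lemma~\ref{lem:B:equiv}, parts (ii)--(iv): after enumerating self-dual candidates, the symmetries $(A,B)\mapsto(B,A)$, $(A,B)\mapsto(\overline{A}^T,\overline{B}^T)$, and $(A,B)\mapsto(A,\overline{B}^T)$ are used to prune redundant pairs before the expensive equivalence test.

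Next I would, for each surviving pair $(A,B)$, build the generator matrix \eqref{eq:const}, compute the minimum weight of $C_\mu(A,B)$, and retain only those codes with $d\ge d^K(n)$. Simultaneously, for $n=32$ and $n=36$, I would record the maximum observed minimum weight over the entire search: because the search exhausts every modified four $\mu$-circulant Hermitian self-dual code (up to the equivalence of Lemma~\ref{lem:B:1}, which preserves the code up to equivalence), any empty record for $d>d^K(n)$ establishes the claimed nonexistence statements in parts (iii) and (iv).

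The final step is equivalence reduction. Among the retained codes I would form equivalence classes using the \textsc{Magma} function \texttt{IsIsomorphic}, which tests quaternary Hermitian self-dual code equivalence in the sense defined in Section~\ref{Sec:2}. Counting the classes then yields the numbers $7,9,3,59,1$ claimed in (i)--(iv), where part (i) splits by $\mu=1$ versus $\mu\in\{\omega,\overline{\omega}\}$ because the $\mu=1$ case lacks some symmetries the $\mu\neq 1$ cases enjoy (the two cases $\mu=\omega$ and $\mu=\overline{\omega}$ are tied together by complex conjugation, so they produce the same count).

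The main obstacle is computational, not conceptual: at $n=36$, naively iterating over all $A,B$ is prohibitive even after the factor of $3$ from Lemma~\ref{lem:B:1}. The key speed-up I would implement is to pre-compute $A\overline{A}^T$ for each admissible $A$ (using $\mu$-circulancy to store only one row), pre-compute $B\overline{B}^T$ for each admissible $B$, and then match via a hash table on the value $I_{n/4}-A\overline{A}^T$; this reduces the search from $O(4^{n/2})$ to $O(4^{n/4})$ operations with $O(4^{n/4})$ memory, which is feasible for $n\le 36$. The equivalence testing with \texttt{IsIsomorphic} can be slow for very large candidate pools, so the pre-pruning using Lemma~\ref{lem:B:equiv} is essential to keep the number of equivalence tests manageable.
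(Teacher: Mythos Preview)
Your proposal is correct and takes essentially the same approach as the paper: an exhaustive computer search over pairs $(A,B)$ using Lemmas~\ref{lem:B:SD} and~\ref{lem:B:1} to cut the search space, followed by equivalence testing via \textsc{Magma}'s \texttt{IsIsomorphic}, with the nonexistence claims for $n=32,36$ falling out of the exhaustiveness. The hash-table matching on $I_{n/4}-A\overline{A}^T$ and the pre-pruning via Lemma~\ref{lem:B:equiv} are sensible implementation details that the paper does not spell out; your speculative aside that $\mu=1$ ``lacks some symmetries'' is not needed for the argument and is not quite the right explanation for the $7$ versus $9$ discrepancy (the counts simply differ because the underlying families of codes differ), but this does not affect the soundness of the plan.
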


For $n \in \{24,28,32,36\}$,
by $C_{n,\mu,i}$ $(i \in \{1,2,\ldots,N_\mu(n)\})$,
we denote 
the modified four $\mu$-circulant Hermitian self-dual $[n,n/2,d^K(n)]$ codes
described in the above proposition,
where
\[
\begin{array}{ll}
N_{1}(24)=7, N_{\mu}(24)=9 \ (\mu \in \{\ww,\vv\}),
&
N_{\mu}(28)=3 \ (\mu \in \{1,\ww,\vv\}),
\\
N_{\mu}(32)=59 \ (\mu \in \{1,\ww,\vv\}),
&
N_{\mu}(36)=1 \ (\mu \in \{1,\ww,\vv\}).
\end{array}
\]
For these codes $C_{n,\mu,i}=C_{\mu}(A,B)$ $(\mu =1,\ww,\vv)$,
the first rows $r_A$ (resp.\ $r_B$) of $A$ (resp.\ $B$) are
listed in Tables~\ref{Tab:24},
\ref{Tab:28},
\ref{Tab:36},
\ref{Tab:32-1}, 
\ref{Tab:32-2} and~\ref{Tab:32-3}.

\begin{rem}
By \textsc{Magma} function \texttt{IsIsomorphic}, we have the following
\[
\begin{array}{ll}
C_{24,\ww,i} \cong C_{24,\vv,i} & (i \in \{1,2,\ldots,9\}),\\
C_{28,1,i} \cong C_{28,\ww,i} \cong C_{28,\vv,i} & (i \in \{1,2,3\}),\\
C_{32,1,i} \cong C_{32,\ww,i} \cong C_{32,\vv,i} & (i \in \{1,2,\ldots,59\}),\\
C_{36,\ww,1} \cong C_{36,\vv,1},
\end{array}
\]
and there is no other pair of equivalent codes among the codes described in 
Proposition~\ref{prop:classification}.
\end{rem}

\begin{table}[thbp]
\caption{Modified four $\mu$-circulant Hermitian self-dual $[24,12,8]$ codes}
\label{Tab:24}
\centering
\medskip
{\small
\begin{tabular}{c|ll|c}
\noalign{\hrule height1pt}
Code & \multicolumn{1}{c}{$r_A$} & \multicolumn{1}{c|}{$r_B$} & $A_{8}$   \\
\hline
$C_{24,1,1}$ &$(1,0,1,0,1,1)$&$(\ww,\vv,\ww,1,0,1)$&513 \\
$C_{24,1,2}$ &$(1,\ww,1,1,\ww,1)$&$(\ww,1,\ww,1,0,1)$&594 \\
$C_{24,1,3}$ &$(1,\vv,\vv,0,\ww,1)$&$(\vv,0,0,\vv,0,0)$&594 \\
$C_{24,1,4}$ &$(0,1,1,0,\ww,\ww)$&$(\ww,\vv,\ww,1,0,1)$&837 \\
$C_{24,1,5}$ &$(0,1,1,0,\ww,\ww)$&$(1,1,1,0,0,0)$&837 \\
$C_{24,1,6}$ &$(1,1,1,0,\ww,\vv)$&$(1,\ww,1,1,0,0)$&837 \\
$C_{24,1,7}$ &$(1,\vv,\vv,\vv,1,\vv)$&$(\vv,\vv,\vv,0,0,0)$&837 \\
$C_{24,\ww,1}$ &$(1,\vv,\vv,1,\vv,\ww)$&$(\vv,0,0,\vv,1,0)$&513 \\
$C_{24,\ww,2}$ &$(1,1,1,\ww,\vv,0)$&$(\vv,0,1,\vv,\ww,0)$&513 \\
$C_{24,\ww,3}$ &$(1,\vv,\vv,1,\vv,0)$&$(\ww,0,\vv,\vv,1,0)$&513 \\
$C_{24,\ww,4}$ &$(1,\vv,\vv,1,\ww,0)$&$(0,0,0,\vv,1,0)$&513 \\
$C_{24,\ww,5}$ &$(1,1,1,\ww,0,\vv)$&$(1,\ww,\vv,1,\ww,\ww)$&513 \\
$C_{24,\ww,6}$ &$(1,\vv,\vv,1,1,\vv)$&$(\ww,\ww,\vv,\ww,1,0)$&513 \\
$C_{24,\ww,7}$ &$(1,\vv,\vv,1,\ww,1)$&$(0,0,\ww,\ww,1,0)$&513 \\
$C_{24,\ww,8}$ &$(0,1,1,\ww,\vv,\vv)$&$(\ww,0,1,\vv,\ww,0)$&513 \\
$C_{24,\ww,9}$ &$(0,1,\ww,\vv,0,0)$&$(1,\vv,0,1,\vv,0)$&513 \\
$C_{24,\vv,1}$ &$(0,0,1,1,0,\ww)$&$(\vv,\vv,1,\vv,\vv,\ww)$&513 \\
$C_{24,\vv,2}$ &$(1,0,\vv,\vv,\vv,1)$&$(0,\ww,0,\vv,\ww,1)$&513 \\
$C_{24,\vv,3}$ &$(1,0,\vv,\vv,1,0)$&$(0,\ww,1,\ww,\ww,1)$&513 \\
$C_{24,\vv,4}$ &$(0,0,1,1,0,0)$&$(\ww,\ww,\ww,0,\vv,\ww)$&513 \\
$C_{24,\vv,5}$ &$(0,0,1,1,0,0)$&$(\ww,0,1,\ww,\vv,\ww)$&513 \\
$C_{24,\vv,6}$ &$(0,0,1,1,1,1)$&$(\vv,0,\vv,1,\vv,\ww)$&513 \\
$C_{24,\vv,7}$ &$(0,0,1,1,\vv,0)$&$(\ww,\vv,\vv,\vv,\vv,\ww)$&513 \\
$C_{24,\vv,8}$ &$(0,0,1,1,1,0)$&$(\vv,\ww,\vv,1,\vv,\ww)$&513 \\
$C_{24,\vv,9}$ &$(0,0,1,1,1,0)$&$(0,\ww,1,0,\vv,\ww)$&513 \\
\noalign{\hrule height1pt}
\end{tabular}
}
\end{table}

\begin{table}[thbp]
\caption{Modified four $\mu$-circulant Hermitian self-dual $[28,14,10]$ codes}
\label{Tab:28}
\centering
\medskip
{\small
\begin{tabular}{c|ll}
\noalign{\hrule height1pt}
Code & \multicolumn{1}{c}{$r_A$} & \multicolumn{1}{c}{$r_B$}  \\
\hline
$C_{28,1,1}$ & $(0,1,\vv,0,1,0,\vv)$ & $(\vv,\ww,\vv,\ww,0,0,\vv)$ \\
$C_{28,1,2}$ & $(1,\ww,1,0,\ww,0,0)$ & $(\vv,\vv,1,\ww,\ww,1,1)$ \\
$C_{28,1,3}$ & $(0,1,\ww,0,\vv,0,0)$ & $(\ww,1,\ww,1,1,0,\ww)$ \\
$C_{28,\ww,1}$ & $(1,\ww,\ww,0,\ww,0,1)$ & $(\ww,0,0,\vv,\ww,0,\ww)$ \\
$C_{28,\ww,2}$ & $(1,\ww,\ww,0,\ww,0,0)$ & $(1,\vv,\vv,\ww,\ww,1,\ww)$ \\
$C_{28,\ww,3}$ & $(0,1,\ww,0,\ww,0,0)$ & $(\vv,\ww,0,\ww,1,1,\ww)$ \\
$C_{28,\vv,1}$ & $(1,0,\vv,\ww,\vv,0,1)$ & $(\ww,0,0,0,1,1,\vv)$ \\
$C_{28,\vv,2}$ & $(1,\ww,1,\vv,1,1,\ww)$ & $(\vv,0,\vv,0,0,\ww,1)$ \\
$C_{28,\vv,3}$ & $(1,\vv,\vv,\ww,\vv,0,1)$ & $(0,\vv,\vv,0,0,0,\vv)$ \\
\noalign{\hrule height1pt}
\end{tabular}
}
\end{table}

For $n=24,32$ and $36$, the possible weight enumerators 
of quaternary Hermitian  self-dual $[n,n/2,d^K(n)]$ 
codes can be written using $A_{d^K(n)}$ (see~\cite{AH} and~\cite[Section~III]{K01}).
Note that the possible weight enumerator
of an extremal Hermitian  self-dual code of a fixed length is uniquely determined.
For the above codes $C_{n,\mu,i}$ ($n=24,32$ and $36$),
the numbers $A_{d^K(n)}$ of codewords of minimum weight $d^K(n)$
are also listed in 
Tables~\ref{Tab:24},
\ref{Tab:36},
\ref{Tab:32-1}, 
\ref{Tab:32-2} and~\ref{Tab:32-3}.
This was calculated by the \textsc{Magma} function \texttt{NumberOfWords}.

\begin{table}[thbp]
\caption{Modified four $\mu$-circulant Hermitian self-dual $[36,18,12]$ codes}
\label{Tab:36}
\centering
\medskip
{\small
\begin{tabular}{c|ll|c}
\noalign{\hrule height1pt}
Code & \multicolumn{1}{c}{$r_A$} & \multicolumn{1}{c|}{$r_B$} & $A_{12}$  \\
\hline
$C_{36,1,1}$&$(1,1,1,1,1,1,\ww,\vv,\ww)$&$(1,1,0,0,1,0,1,0,0)$&20844\\
$C_{36,\ww,1}$&$(1,\ww,1,1,1,1,\vv,1,1)$&$(1,\ww,\vv,\vv,0,\vv,1,0,0)$&19548\\
$C_{36,\vv,1}$&$(1,\vv,1,1,1,1,\ww,1,1)$&$(1,\vv,\ww,\ww,0,\ww,1,0,0)$&19548\\
\noalign{\hrule height1pt}
\end{tabular}
}
\end{table}

%

\subsection{Largest minimum weights of modified  four $\mu$-circulant Hermitian self-dual codes}

We give some observations on 
the largest minimum weight $d(n)$ among all Hermitian self-dual codes of length $n$
and the largest minimum weight
$d_{\mu}(n)$ $(\mu =1,\ww,\vv)$ among all
modified four $\mu$-circulant Hermitian self-dual codes of length $n$.
For lengths $n=40$ and $44$, 
by a method similar to the above, 
our exhaustive computer search based on Lemmas~\ref{lem:B:SD} and~\ref{lem:B:1}
verified that 
there is no modified four $\mu$-circulant Hermitian self-dual $[n,n/2,d]$
code with $d > d^K(n)$ for $\mu \in \{1,\ww,\vv\}$
(see Table~\ref{Tab:dK} for the minimum weights $d^K(n)$).
In addition, we found a
modified four $\mu$-circulant Hermitian self-dual $[n,n/2,d^K(n)]$
code $C_{n,\mu}$ for $\mu \in \{1,\ww,\vv\}$.
This implies the following proposition.

\begin{prop}
For $\mu \in \{1,\ww,\vv\}$,
$d_{\mu}(40)=10\text{ and }
d_{\mu}(44)=12$.
\end{prop}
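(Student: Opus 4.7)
The plan is to mirror the exhaustive computer search used in Proposition~\ref{prop:classification}, extending it to the two lengths $n \in \{40,44\}$. Write $m = n/4$, so that a modified four $\mu$-circulant code $C_{\mu}(A,B)$ of length $n$ is completely determined by the first rows $r_A, r_B \in \FF_4^{m}$ of the $m \times m$ $\mu$-circulant matrices $A$ and $B$. By Lemma~\ref{lem:B:SD}, Hermitian self-duality is guaranteed by the single matrix identity $A\overline{A}^T + B\overline{B}^T = I_m$, and Lemma~\ref{lem:B:1} allows us to normalize the first nonzero coordinate of $r_A$ to equal $1$, trimming the search by roughly a factor of $3$.

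The search itself proceeds in three steps. First, enumerate all normalized pairs $(r_A, r_B)$ and retain those satisfying $A\overline{A}^T + B\overline{B}^T = I_m$; by Lemma~\ref{lem:B:SD} each surviving pair yields a modified four $\mu$-circulant Hermitian self-dual $[n, n/2]$ code. Second, for each surviving code, compute (or at least bound) the minimum weight, using early termination as soon as a codeword of small weight is detected. Third, verify that no code in the exhaustive list attains a minimum weight exceeding the value claimed in the proposition, and exhibit one code $C_{n,\mu}$ realizing that value. The lower bounds $d_\mu(40) \ge 10$ and $d_\mu(44) \ge 12$ are witnessed by the explicit codes, and the matching upper bounds follow from exhaustiveness of the enumeration.

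The main obstacle is computational rather than conceptual. For $n = 44$ the raw space of normalized pairs has size on the order of $4^{22}/3 \approx 6 \times 10^{12}$, which is prohibitive for a naive double loop. The practical remedy is to iterate over $r_A$ in the outer loop, precompute the target matrix $T = I_m + A\overline{A}^T$, and then restrict the inner loop to those $r_B$ for which $B\overline{B}^T = T$; Lemma~\ref{lem:B:ABeqBA} ensures that the $\mu$-circulant constraint collapses to a small number of independent scalar equations that can be scanned efficiently. Combined with weight-based pruning during the minimum-weight verification, this brings the search within feasible range in \textsc{Magma}. The entire work thus reduces to engineering a sufficiently fast enumeration; given the infrastructure of Section~\ref{Sec:Def}, the mathematical content is routine.
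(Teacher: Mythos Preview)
Your approach matches the paper's: an exhaustive computer search over pairs $(r_A,r_B)$ using Lemmas~\ref{lem:B:SD} and~\ref{lem:B:1}, with exhaustiveness giving the upper bounds and explicit witness codes $C_{n,\mu}$ (the paper lists them in Table~\ref{Tab:4044}) giving the lower bounds. The computational refinements you sketch go beyond what the paper makes explicit, but the underlying strategy is identical.

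One caveat: the printed statement has $d_\mu(40)=10$, but this is evidently a typo---the paper's preceding paragraph, Table~\ref{Tab:4044}, and Table~\ref{Tab:d} all establish $d_\mu(40)=12$, exhibiting explicit $[40,20,12]$ codes $C_{40,\mu}$. Your third step (``verify that no code in the exhaustive list attains a minimum weight exceeding the value claimed in the proposition'') would therefore fail as written for $n=40$, since codes of minimum weight $12$ do appear; once the target is corrected to $12$, your plan coincides with the paper's.
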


For the above codes $C_{40,\mu}=C_{\mu}(A,B)$ and $C_{44,\mu}=C_{\mu}(A,B)$,
the first rows $r_A$ (resp.\ $r_B$) of $A$ (resp.\ $B$) are
listed in Table~\ref{Tab:4044}.
The numbers $A_{d^K(n)}$ of codewords of minimum weight $d^K(n)$
are also listed in the table.
This was calculated by the \textsc{Magma} function \texttt{NumberOfWords}.
The numbers show that these codes are inequivalent.

\begin{table}[thb]
\caption{Modified four $\mu$-circulant Hermitian self-dual codes $C_{40,\mu}$ and $C_{44,\mu}$}
\label{Tab:4044}
\centering
\medskip
{\small
\begin{tabular}{c|ll|c}
\noalign{\hrule height1pt}
Code & \multicolumn{1}{c}{$r_A$} & \multicolumn{1}{c|}{$r_B$} &$A_{d^K(n)}$  \\
\hline
$C_{40,1}$   &$(1,0,0,1,\vv,1,0,0,1,0)$&$(\ww,\ww,1,1,\ww,\ww,0,\ww,\ww,0)$ & $5220$ \\
$C_{40,\ww}$ &$(1,\vv,\vv,1,1,1,\ww,\vv,\vv,\vv)$&$(\ww,1,0,\ww,\vv,\ww,\ww,\vv,0,0)$ & $5130$ \\
$C_{40,\vv}$ &$(1,\ww,\ww,\ww,\vv,\vv,\ww,\vv,0,\ww)$&$(1,\ww,0,1,\vv,0,\vv,0,\ww,0)$ & $5040$ \\
$C_{44,1}$   &$(1,\ww,\vv,0,0,\ww,0,\vv,\ww,1,\ww)$&$(\vv,\ww,\vv,\ww,\ww,\vv,0,\ww,0,0,0)$ & $1188$ \\
$C_{44,\ww}$ &$(1,0,0,\vv,\ww,1,\vv,0,\ww,1,0)$&$(\vv,0,\vv,0,\ww,0,0,\vv,1,\ww,0)$ & $1551$ \\
$C_{44,\vv}$ &$(1,0,\vv,0,1,1,1,\ww,0,\vv,\ww)$&$(\ww,1,\vv,\vv,0,\vv,\vv,\ww,\ww,\vv,0)$ & $1749$ \\
\noalign{\hrule height1pt}
\end{tabular}
}
\end{table}

For lengths $48,52,\ldots,76$ and $80$, 
by a non-exhaustive search  based on Lemmas~\ref{lem:B:SD} and~\ref{lem:B:1}, 
we  continued finding
modified four $\mu$-circulant Hermitian self-dual codes having large minimum
weights.
Then we found a
modified four $\mu$-circulant Hermitian self-dual code $C_{n,\mu}$ 
of length $n$ and minimum weight $d$ for
\begin{align*}
(n,\mu,d)=&
(52, 1, 14),
(52, \ww, 14),
(52, \vv, 14),
(56, 1, 16),
(56, \ww, 16),
(56, \vv, 14),
\\&
(60, 1, 16),
(60, \ww, 16),
(60, \vv, 16),
(64, 1, 16),
(64, \ww, 16),
(64, \vv, 16),
\\&
(68, 1, 18),
(68, \ww, 18),
(68, \vv, 18),
(72, 1, 18),
(72, \ww, 18),
(72, \vv, 18),
\\&
(76, 1, 18),
(76, \ww, 18),
(76, \vv, 18),
(80, 1, 20),
(80, \ww, 20),
(80, \vv, 20).
\end{align*}
For the above codes $C_{n,\mu}=C_{\mu}(A,B)$,
the first rows $r_A$ (resp.\ $r_B$) of $A$ (resp.\ $B$) are
listed in Table~\ref{Tab:L}.
We have the following proposition.

\begin{prop}\label{prop:56}
There are quaternary Hermitian self-dual $[56,28,16]$ codes.
\end{prop}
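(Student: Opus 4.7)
The plan is purely constructive: exhibit an explicit pair $(A,B)$ of $14\times 14$ $\mu$-circulant matrices over $\FF_4$ for which $C_\mu(A,B)$ is a Hermitian self-dual $[56,28,16]$ code. Since $4\cdot 14 = 56$, the code $C_\mu(A,B)$ automatically has length $56$ and dimension $28$, so only two conditions remain to be checked: the Hermitian self-duality, which by Lemma~\ref{lem:B:SD} reduces to verifying the single matrix identity
\[
A\overline{A}^T + B\overline{B}^T = I_{14},
\]
and the minimum weight, which must be shown to equal $16$.

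Concretely, I would take the two pairs $(r_A,r_B)$ listed in Table~\ref{Tab:L} for $(n,\mu)=(56,1)$ and $(n,\mu)=(56,\ww)$, form the corresponding $\mu$-circulant matrices $A$ and $B$ via $A=\sum_{i=0}^{13} (r_A)_i E_{14}(\mu)^i$ (and analogously for $B$), and then carry out two routine verifications. First, compute $A\overline{A}^T + B\overline{B}^T$ entry-by-entry (or equivalently, use Lemma~\ref{lem:B:ABeqBA} to rewrite $\overline{A}^T,\overline{B}^T$ as $\mu$-circulants and reduce the identity to a system of $14$ scalar equations in the coordinates of $r_A,r_B$) and check that it equals $I_{14}$, thereby invoking Lemma~\ref{lem:B:SD} to conclude Hermitian self-duality. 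Second, compute the minimum weight of $C_\mu(A,B)$ directly (e.g., via \textsc{Magma}'s \texttt{MinimumWeight}) and verify that it is $16$; since the Hermitian self-duality bound~\eqref{eq:bound} for $n=56$ gives $d\le 2\lfloor 56/6\rfloor+2 = 20$, the value $16$ is admissible.

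The only genuine difficulty is the search that produces the candidate pairs in the first place: an exhaustive scan over all pairs of first rows is of size on the order of $4^{28}$, which is too large. The plan is therefore not exhaustive but guided by Lemmas~\ref{lem:B:equiv} and~\ref{lem:B:1}. By Lemma~\ref{lem:B:1} we may assume the first nonzero entry of $r_A$ is $1$, and by the various equivalences in Lemma~\ref{lem:B:equiv} we may further identify pairs $(A,B)$ with $(B,A)$, $(\overline A^T,\overline B^T)$, and $(A,\overline B^T)$, cutting the search space substantially. A randomized or heuristic search within this reduced space, followed by the minimum-weight check, is what produces the witnesses. Once the two explicit rows in Table~\ref{Tab:L} are in hand, the verification above is mechanical and the proposition follows.
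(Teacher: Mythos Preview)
Your proposal is correct and matches the paper's approach essentially verbatim: the paper likewise exhibits the explicit first rows $r_A,r_B$ for $(n,\mu)=(56,1)$ and $(56,\ww)$ in Table~\ref{Tab:L}, found by a non-exhaustive computer search guided by Lemmas~\ref{lem:B:SD} and~\ref{lem:B:1}, and then verifies Hermitian self-duality and minimum weight $16$ in \textsc{Magma}. The only minor difference is that the paper does not explicitly invoke the equivalences of Lemma~\ref{lem:B:equiv} for the length-$56$ search, but this is inessential.
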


We emphasize that $C_{56,1}$ and $C_{56,\ww}$ are the first examples of not only
Hermitian self-dual $[56,28,16]$ codes but also (linear) $[56,28,16]$ codes~\cite{Grassl}.
We give the weight enumerators of  these codes in the next subsection.

In Table~\ref{Tab:d}, we summarize the current information on
$d_1(n)$, $d_\ww(n)$  and $d_{\vv}(n)$.
The upper bounds on $d_1(n)$, $d_\ww(n)$ and $d_{\vv}(n)$ follow from~\eqref{eq:bound}.
The lower bounds on $d_1(n)$, $d_\ww(n)$ and $d_{\vv}(n)$ follow from 
Table~\ref{Tab:L}.

\begin{table}[th]
\caption{Largest minimum weights $d_1(n)$, $d_\ww(n)$ and $d_{\vv}(n)$}
\label{Tab:d}
\centering
\medskip
{\small
\begin{tabular}{c|ccc||c|ccc}
\noalign{\hrule height1pt}
$n$ & $d_1(n)$ & $d_\ww(n)$ & $d_{\vv}(n)$ &
$n$ & $d_1(n)$ & $d_\ww(n)$ & $d_{\vv}(n)$ \\
\hline
24 &8 &8 &8 &56 & {16}--20 & {16}--20& 14--20\\
28 &10 &10 &10 &60 & 16--22 & 16--22& 16--22\\
32 &10 &10 &10 &64 & 16--22& 16--22& 16--22\\
36 &12 &12 &12 &68 & 18--24& 18--24& 18--24\\
40 &12 &12 &12 &72 & 18--26& 18--26& 18--26\\
44 &12 &12 &12 &76 & 18--26& 18--26& 18--26\\
48 & 14--18& 14--18 & 14--18   &80 & 20--28& 20--28& 20--28\\
52 & 14--18& 14--18& 14--18 & & & & \\
\noalign{\hrule height1pt}
\end{tabular}
}
\end{table}


\subsection{$C_{56,1}$ and $C_{56,\ww}$}

It is a main problem in coding theory to determine
the largest minimum weight $d_q(n,k)$ among
all $[n,k]$ codes over a finite field of order $q$
for a given $(q,n,k)$.
The current information on $d_4(n,k)$ can be found in~\cite{Grassl}.
For example, it was previously known that
$15 \le d_4(56,28) \le 21$.
As a consequence of Proposition~\ref{prop:56}, we have the following corollary.

\begin{cor}\label{cor:56}
$16 \le d_4(56,28) \le 21$.
\end{cor}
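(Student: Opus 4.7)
The plan is to assemble the two inequalities of Corollary~\ref{cor:56} from separate sources: the upper bound is inherited verbatim from the existing tables, while the lower bound is the genuinely new contribution, delivered by the codes produced in Proposition~\ref{prop:56}.

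First I would recall that the upper bound $d_4(56,28)\le 21$ is already on record in Grassl's tables~\cite{Grassl}; nothing in this paper modifies that bound, so it can be quoted directly. The only work to do is therefore the lower bound $d_4(56,28)\ge 16$.

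For the lower bound, I would invoke Proposition~\ref{prop:56}, which asserts the existence of a quaternary Hermitian self-dual $[56,28,16]$ code (explicitly, either of $C_{56,1}$ or $C_{56,\ww}$, whose defining first rows $r_A,r_B$ are recorded in Table~\ref{Tab:L}). Because Hermitian self-duality is a condition on the code itself over $\FF_4$, any such code is in particular a quaternary linear $[56,28]$ code with minimum weight $16$, so by definition of $d_4(n,k)$ we obtain $d_4(56,28)\ge 16$. Combining this with the upper bound from~\cite{Grassl} gives the claimed chain $16\le d_4(56,28)\le 21$.

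There is essentially no obstacle in the proof of the corollary itself, since it is a direct consequence of Proposition~\ref{prop:56} together with the tabulated upper bound; the genuine technical work (constructing the pairs $(A,B)$ of $14\times 14$ $\mu$-circulant matrices satisfying $A\overline{A}^T+B\overline{B}^T=I_{14}$ and verifying the minimum weight of the resulting code via Lemma~\ref{lem:B:SD}) has already been carried out in Proposition~\ref{prop:56}. What one should verify, in order to claim this as an actual improvement over the previously known lower bound $15$, is that none of the standard constructions in~\cite{Grassl} already reached $16$; this is confirmed by the entry $15\le d_4(56,28)\le 21$ cited just before the corollary.
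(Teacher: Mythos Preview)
Your proposal is correct and matches the paper's own reasoning: the corollary is stated immediately after noting that $15\le d_4(56,28)\le 21$ was previously known from~\cite{Grassl}, and is presented explicitly ``as a consequence of Proposition~\ref{prop:56}'', with no further argument given. Your observation that a Hermitian self-dual $[56,28,16]$ code is in particular a linear $[56,28,16]$ code, yielding the lower bound, is exactly the intended (trivial) step.
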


Now we determine the weight enumerators of $C_{56,1}$ and $C_{56,\ww}$.
It is well known that the possible weight enumerators of  quaternary Hermitian  self-dual codes
can be determined by the Gleason type theorem~\cite[p.~804]{MMS}
(see also~\cite[Theorem~13]{MOSW}). 
The weight enumerator $W$ of a  quaternary Hermitian  self-dual code of length $n$
is written as:
\begin{equation}\label{eq:F4:G}
{W = \sum_{j=0}^{\lfloor \frac{n}{6} \rfloor} 
a_j(1+3y^2)^{\frac{n}{2}-3j}(y^2(1-y^2)^2)^j,}  
\end{equation}
using some integers $a_j$.
The possible weight enumerator $W_{56,16}=\sum_{i=0}^{56}A_i y^i$
of a quaternary Hermitian  self-dual $[56,28,16]$ 
code is determined by~\eqref{eq:F4:G},
where $A_i$ are listed in Table~\ref{Tab:W56} together with
$\alpha=A_{16}$ and $\beta=A_{18}$.
Only this calculation was done by \textsc{Mathematica}~\cite{Mathematica}.
By the \textsc{Magma} function \texttt{NumberOfWords},
we calculated that 
\begin{align*}
(A_{16},A_{18})= &
(48825,2275560)
\text{ and }
(47544, 2282700),
\end{align*}
for $C_{56,1}$ and $C_{56,\ww}$, respectively.
This determines the weight enumerators of $C_{56,1}$ and $C_{56,\ww}$.

\begin{table}[th]
\caption{Possible weight enumerator $W_{56,16}$}
\label{Tab:W56}
\centering
\medskip
{\small
\begin{tabular}{c|r}
\noalign{\hrule height1pt}
$i$ & \multicolumn{1}{c}{$A_i$} \\
\hline
 0&$1$ \\
16&$\alpha$ \\
18&$\beta$ \\
20&$ 113963850 - 78 \alpha - 15 \beta$ \\
22&$ 1616214600 + 520 \alpha + 99 \beta$ \\
24&$ 35022262275 - 1495 \alpha - 357 \beta$ \\
26&$ 467452738368 + 1344 \alpha + 612 \beta$ \\
28&$ 4854958425240 + 5560 \alpha + 612 \beta$ \\
30&$ 37999586848608 - 28576 \alpha - 7140 \beta$ \\
32&$ 223928221341825 + 79170 \alpha + 23868 \beta$ \\
34&$ 991894905892800 - 170560 \alpha - 51714 \beta$ \\
36&$ 3272633909885340 + 309452 \alpha + 82654 \beta$ \\
38&$ 7961209635178800 - 471120 \alpha - 102102 \beta$ \\
40&$ 14053893738878070 + 586586 \alpha + 99450 \beta$ \\
42&$ 17629097730552000 - 584000 \alpha - 76908 \beta$ \\
44&$ 15262097167863000 + 457080 \alpha + 47124 \beta$ \\
46&$ 8759255147042400 - 276640 \alpha - 22644 \beta$ \\
48&$ 3144896807802750 + 126685 \alpha + 8364 \beta$ \\
50&$ 646962821144640 - 42432 \alpha - 2295 \beta$ \\
52&$ 65864956983210 + 9810 \alpha + 441 \beta$ \\
54&$ 2485731965640 - 1400 \alpha - 53 \beta$ \\
56&$ 14512944519 + 93 \alpha + 3 \beta$ \\
\noalign{\hrule height1pt}
\end{tabular}
}
\end{table}


\subsection{Largest minimum weights $d(n)$}

In Table~\ref{Tab:dmax},
we summarize the current information on the largest minimum weights $d(n)$ for $n \in \{24,28,\ldots,80\}$.
The upper bounds on $d(n)$ follow from~\eqref{eq:bound}.
The references about the lower bounds on $d(n)$ are also listed in the table.

\begin{table}[th]
\caption{Largest minimum weights $d(n)$}
\label{Tab:dmax}
\centering
\medskip
{\small
\begin{tabular}{c|cc||c|cc}
\noalign{\hrule height1pt}
$n$ & $d(n)$ & References &$n$ & $d(n)$ & References \\
\hline
24 &8 &  (see~\cite[p.~140]{CP}) &64 &16--22  &\cite[Table~5]{GG} \\
28 &10 &  (see~\cite[Theorem~9]{Hu90})&68 &18--24  &\cite[Table~5]{GG} \\
32 &10--12 &\cite[Table~I]{G00} &72 &18--26  &\cite[Table~5]{GG} \\
36 &12--14 &\cite[Table~I]{G00} &76 &18--26  &\cite[Table~5]{GG} \\
40 &12--14 &\cite[Table~I]{G00} &80 &20--28  &\cite[Table~5]{GG} \\
44 &12--16 &\cite[Table~5]{GG} &84 &{\bf 20}--30  & $C_{84,\ww}$ in Table~\ref{Tab:L}\\
48 &14--18  &\cite[Table~5]{GG} &88 &{\bf 20}--30  &$C_{88,\ww}$ in Table~\ref{Tab:L}\\
52 &14--18  &\cite[Table~5]{GG} &92 &22--32  &$G_{92}$ (see~\cite{Grassl}) \\
56 &{\bf 16}--20 & $C_{56,1},C_{56,\ww}$ in Table~\ref{Tab:L} 
      &96 &{\bf 22}--34 & $C_{96,\ww}$ in Table~\ref{Tab:L}\\
60 &16--22  &\cite[Table~5]{GG} &100 &22--34  & $G_{100}$ (see~\cite{Grassl}) \\
\noalign{\hrule height1pt}
\end{tabular}
}
\end{table}

In \cite[Table~5]{GG}, the largest minimum weights $d(n)$ were considered for $n \le 80$.
Here we investigate  the largest minimum weights $d(n)$ for $n \in \{84, 88, 92, 96, 100\}$.
A Hermitian self-dual code of length $n$ and minimum weight $22$ 
is given  in \cite{Grassl} for $n=92$ and $100$.
We denote the two codes
by $G_{92}$ and $G_{100}$, respectively.
As information, we briefly give the construction of $G_{92}$ and $G_{100}$.
Let $G_{91,1}$ and $G_{91,2}$ denote the cyclic codes of length $91$ with generator
polynomials $g_1$ and $g_2$, respectively, where
\begin{align*}
g_1=&
x^{46} + \ww x^{44} + x^{43} + \vv x^{42} + \vv x^{41} + \ww x^{40}
+ \vv x^{39} + \ww x^{38} + x^{37} + x^{35} 
\\&
+ x^{34} + \ww x^{33}
+ x^{31} + \ww x^{30} + \vv x^{28} + x^{27} + x^{26} + \ww x^{25}
+ \ww x^{24} + x^{21}
\\&
 + x^{20} + x^{19} + x^{18} + \vv x^{16}
+ \ww x^{15} + \vv x^{14} + \vv x^{13} + \vv x^{12} + \vv x^{10}
\\&
+ \vv x^{8} +\vv x^{7} + x^{6} + \ww x^{5} + x^{4} + \ww x^{3}
+ \vv x^{2} + x +1,
\\
g_2=&
x^{45} + x^{44} + \vv x^{43} + \ww x^{42} + x^{41} + \ww x^{40}
+ \vv x^{38} + x^{37} + x^{34} + \ww x^{32} 
\\&
+ \ww x^{31} + \vv x^{30}
+ x^{29} + x^{28} + \ww x^{27} + \vv x^{26} + \ww x^{25} + \ww x^{23}
+ \ww x^{22} 
\\&
+ \ww x^{21} + \vv x^{20} + \ww x^{19} + \vv x^{18}
+ \ww x^{17} + \ww x^{16} + x^{15} + \vv x^{14} 
\\&
+ \vv x^{12} + \vv x^{9}
+ \vv x^{8} + \vv x^{6} + \ww x^{5} + x^{3} + \vv x^{2} + 1.
\end{align*}
The code $G_{92}$ is constructed from $G_{91,1}$, $G_{91,2}$ and the $[1,1]$ code
by Construction~X.
The code $G_{100}$ is equivalent to the double circulant code with generator matrix
$
\left(
\begin{array}{cc}
I_{50} & R 
\end{array}
\right)$,
where $R$ is the circulant matrix with the first row
\begin{multline*}
(\ww,\ww,1,\vv,\ww,1,1,\ww,\vv,0,1,0,\vv,1,\ww,\vv,\vv,1,\vv,\vv,0,1,\vv,
\vv,0,
\\
\ww,0,0,
1,1,\ww,\ww,\ww,1,\ww,\vv,1,1,1,1,\ww,1,\vv,1,\ww,\ww,\ww,
0,\vv,1).
\end{multline*}
For
$(n,d)=(84,20)$, $(88,20)$ \text{ and } $(96,22)$,
by a non-exhaustive search  based on Lemmas~\ref{lem:B:SD} and~\ref{lem:B:1}, 
we found a
modified four $\ww$-circulant Hermitian self-dual code $C_{n,\ww}=C_{\ww}(A,B)$ 
of length $n$ and minimum weight $d$.
For the above codes,
the first rows $r_A$ (resp.\ $r_B$) of $A$ (resp.\ $B$) are
listed in Table~\ref{Tab:L}.
In Table~\ref{Tab:d},
we give lower and upper bounds on the largest minimum weights $d(n)$ for 
$n \in \{84, 88, 92, 96, 100\}$.
The upper bounds on $d(n)$ follow from~\eqref{eq:bound}.
The references about the lower bounds on $d(n)$ are also listed in the table.
For 
\[
(n,d)=(56,16), (84,20), (88,20) \text{ and } (92,22),
\]
a Hermitian self-dual code of length $n$ and minimum weight $d$ is constructed
for the first time.
In Table~\ref{Tab:dmax}, the minimum weights of these codes are given in bold.

\section{Application to quantum codes}\label{Sec:Q}

In this section, we consider an application of  the quaternary Hermitian  self-dual $[56,28,16]$ 
codes $C_{56,1}$ and $C_{56,\ww}$ found in the previous section  to quantum codes.  

A quaternary \emph{additive} $(n,2^k)$ code $\cC$ is 
an additive subgroup of $\FF_4^n$ with $|\cC|=2^k$.
The {\em dual} code ${\cC}^*$ of a quaternary additive $(n,2^k)$ code $\cC$ 
is defined as
\[
{\cC}^*=
\{x \in \FF_4^n \mid x * y = 0 \text{ for all } y \in \cC\},
\]
under the following trace inner product
\[
x * y=\sum_{i=1}^n (x_iy_i^2+x_i^2y_i)
\]
for $x=(x_1,x_2,\ldots,x_n)$, $y=(y_1,y_2,\ldots,y_n) \in
\FF_4^n$.
A quaternary  additive code $\cC$ is called 
\emph{self-orthogonal} and 
\emph{self-dual} if $\cC \subset {\cC}^*$ and $\cC = {\cC}^*$, respectively.
Note that a  quaternary Hermitian self-dual $[n,n/2,d]$ code is a quaternary  additive self-dual
$(n,2^n)$ code with minimum weight $d$ (see e.g.,~\cite{GK}).

A useful method for constructing quantum codes from quaternary additive self-orthogonal 
 codes was given by Calderbank, Rains, Shor and 
Sloane~\cite{CRSS} (see~\cite{CRSS} for 
undefined terms concerning quantum codes).
A quaternary additive self-orthogonal $(n,2^{n-k})$ code $\cC$
such that there is no vector of weight less than $d$ in ${\cC}^* \setminus \cC$,
gives a quantum $[[n,k,d]]$ code, where $k \ne 0$.
A quaternary additive self-dual $(n,2^n)$ code with minimum weight $d$ gives a quantum $[[n,0,d]]$ code.
Let $d_{\max}(n,k)$ denote the largest minimum weight $d$
among quantum $[[n,k,d]]$ codes.
Similar to the classical coding theory,
it is a fundamental problem to determine $d_{\max}(n,k)$.
A table on $d_{\max}(n,k)$ is given in~\cite[Table~III]{CRSS} for $n \le 30$.
An extended table is obtained  electronically from~\cite{Grassl}.
For example, it was previously known that $15 \le d_{\max}(56,0) \le 20$~\cite{Grassl}.

In Section~\ref{Sec:N}, quaternary Hermitian self-dual $[56,28,16]$ codes $C_{56,1}$
and $C_{56,\ww}$ were constructed for the first time.
By the above method, a quantum $[[56,0,16]]$ code is obtained.
Hence, we have the following proposition.

\begin{prop}\label{prop:Q}
\begin{enumerate}
\item
There is a quantum $[[56,0,16]]$ code.
\item
$16 \le d_{\max}(56,0) \le 20$.
\end{enumerate}
\end{prop}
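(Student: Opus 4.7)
The plan is essentially to read off both statements as consequences of Proposition~\ref{prop:56} together with the CRSS construction already recalled in this section. First I would take the code $C_{56,1}$ (equivalently $C_{56,\ww}$) from Proposition~\ref{prop:56}; by construction this is a quaternary Hermitian self-dual $[56,28,16]$ code. Using the observation stated just above the definition of $d_{\max}(n,k)$---that a quaternary Hermitian self-dual $[n,n/2,d]$ code is a quaternary additive self-dual $(n,2^n)$ code with minimum weight $d$---I would reinterpret $C_{56,1}$ as an additive self-dual $(56,2^{56})$ code of minimum weight $16$.

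Next, I would feed this additive self-dual code into the Calderbank--Rains--Shor--Sloane construction recalled in the excerpt: a quaternary additive self-dual $(n,2^n)$ code of minimum weight $d$ yields a quantum $[[n,0,d]]$ code. Applied to our code with $(n,d)=(56,16)$, this produces a quantum $[[56,0,16]]$ code, which gives (i).

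For (ii), the existence of a quantum $[[56,0,16]]$ code immediately yields $d_{\max}(56,0) \ge 16$. The upper bound $d_{\max}(56,0) \le 20$ is already recorded in the excerpt as previously known (from the table obtained electronically via~\cite{Grassl}), so combining these two bounds gives the stated inequality $16 \le d_{\max}(56,0) \le 20$.

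There is no real obstacle here: both parts are formal consequences of results that have already been established or cited. The only step that required genuine work is the construction of $C_{56,1}$ and $C_{56,\ww}$ themselves, which was carried out in Section~\ref{Sec:N}; everything else is a translation between the language of quaternary Hermitian self-dual codes, quaternary additive self-dual codes, and quantum codes via the CRSS correspondence.
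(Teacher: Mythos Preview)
Your proposal is correct and follows essentially the same approach as the paper: the paper likewise observes that the Hermitian self-dual $[56,28,16]$ codes $C_{56,1}$ and $C_{56,\ww}$ constructed in Section~\ref{Sec:N}, viewed as additive self-dual $(56,2^{56})$ codes, yield a quantum $[[56,0,16]]$ code via the CRSS construction, and then combines this with the previously known upper bound $d_{\max}(56,0)\le 20$ from~\cite{Grassl}.
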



\bigskip
\noindent
\textbf{Acknowledgments.}
This work was supported by JSPS KAKENHI Grant Number 19H01802.



\begin{landscape}

\begin{table}[thbp]
\caption{Modified four circulant Hermitian self-dual $[32,16,10]$ codes $C_{32,1,i}$}
\label{Tab:32-1}
\centering
\medskip
{\footnotesize
\begin{tabular}{c|ll|c||c|ll|c}
\noalign{\hrule height1pt}
$i$ & \multicolumn{1}{c}{$r_A$} & \multicolumn{1}{c|}{$r_B$} & $A_{10}$&
$i$ & \multicolumn{1}{c}{$r_A$} & \multicolumn{1}{c|}{$r_B$}  & $A_{10}$ \\
\hline
${1}$&$(0,1,0,0,0,\vv,\vv,\ww)$&$(\ww,\vv,1,\ww,1,0,1,\ww)$&1200&
${31}$&$(0,1,1,0,0,1,\vv,\ww)$&$(\ww,\ww,1,0,\vv,0,1,\ww)$&1776\\
${2}$&$(0,1,1,0,0,\vv,\ww,\ww)$&$(1,\vv,\ww,\vv,0,0,1,\ww)$&1200&
${32}$&$(0,1,0,0,0,\vv,\ww,0)$&$(\vv,\ww,\ww,\ww,0,0,1,\ww)$&1776\\
${3}$&$(0,1,1,0,0,\vv,\ww,\ww)$&$(\ww,\vv,0,\vv,1,0,1,\ww)$&1200&
${33}$&$(0,1,0,0,0,1,\vv,\vv)$&$(\vv,1,1,\vv,1,0,1,\ww)$&1776\\
${4}$&$(1,\ww,1,0,0,1,\ww,1)$&$(0,\ww,1,0,1,0,1,\ww)$&1344&
${34}$&$(0,1,0,0,0,\vv,\vv,\ww)$&$(\ww,\ww,\vv,\ww,\ww,0,1,\ww)$&1776\\
${5}$&$(1,\ww,1,0,0,\vv,\ww,\vv)$&$(\vv,\vv,\ww,\ww,0,1,1,\ww)$&1344&
${35}$&$(1,1,0,0,0,\vv,\ww,1)$&$(\vv,1,\vv,\ww,1,\ww,1,\ww)$&1776\\
${6}$&$(0,1,0,1,0,\vv,\ww,1)$&$(\ww,\vv,\ww,\ww,1,1,1,\ww)$&1392&
${36}$&$(1,1,1,0,0,\ww,\ww,1)$&$(\vv,1,\ww,\vv,\vv,0,1,\ww)$&1776\\
${7}$&$(1,\vv,0,0,0,1,\ww,1)$&$(1,\ww,0,\vv,1,0,\vv,1)$&1392&
${37}$&$(1,\vv,\vv,0,0,1,\ww,0)$&$(1,\ww,\ww,1,1,\vv,\vv,1)$&1776\\
${8}$&$(0,1,0,1,0,\ww,\ww,\ww)$&$(\vv,\vv,\ww,1,1,1,1,\ww)$&1392&
${38}$&$(1,\vv,\vv,0,0,1,\ww,0)$&$(\ww,\vv,0,1,\ww,0,\vv,1)$&1776\\
${9}$&$(0,1,1,0,0,\ww,\ww,0)$&$(\ww,1,\ww,\ww,1,0,1,\ww)$&1392&
${39}$&$(1,1,1,0,0,1,1,1)$&$(\ww,0,\vv,1,0,0,1,\ww)$&1776\\
${10}$&$(1,\vv,\ww,0,0,1,\ww,\ww)$&$(1,0,\ww,0,\ww,0,\ww,\vv)$&1488&
${40}$&$(1,0,0,0,0,\vv,\vv,1)$&$(\vv,0,\ww,\vv,0,0,1,\ww)$&1776\\
${11}$&$(1,\vv,\vv,0,0,\vv,\ww,\ww)$&$(\ww,\ww,1,\ww,1,0,\vv,1)$&1488&
${41}$&$(1,0,0,0,0,\ww,\ww,1)$&$(\vv,1,\ww,\vv,\vv,0,1,\ww)$&1776\\
${12}$&$(1,0,1,0,0,1,\vv,\ww)$&$(\vv,0,\vv,1,1,0,1,\ww)$&1488&
${42}$&$(1,1,0,0,0,1,1,\ww)$&$(\ww,1,\ww,1,\ww,1,1,\ww)$&1776\\
${13}$&$(0,1,0,0,0,\vv,\vv,\ww)$&$(\ww,\vv,\ww,\ww,\ww,0,1,\ww)$&1488&
${43}$&$(1,\vv,0,0,0,\vv,1,\vv)$&$(\ww,\ww,1,\vv,\vv,\vv,\vv,1)$&1776\\
${14}$&$(0,1,0,0,0,\vv,\ww,0)$&$(\vv,\ww,\vv,1,\vv,1,1,\ww)$&1536&
${44}$&$(1,1,1,0,0,1,1,1)$&$(\vv,1,\vv,\ww,1,0,1,\ww)$&1776\\
${15}$&$(0,1,0,0,0,\ww,1,\vv)$&$(\vv,\ww,\ww,1,1,0,1,\ww)$&1632&
${45}$&$(0,1,0,\vv,0,\ww,\vv,\vv)$&$(\vv,\ww,1,1,1,1,\vv,1)$&1824\\
${16}$&$(1,\vv,\vv,0,0,1,\vv,\ww)$&$(\vv,0,\vv,0,\ww,0,\vv,1)$&1632&
${46}$&$(1,1,1,0,0,1,1,1)$&$(\ww,1,\vv,\vv,\ww,0,1,\ww)$&1824\\
${17}$&$(1,\vv,0,\vv,0,\vv,1,1)$&$(\ww,\vv,1,\vv,1,0,\vv,1)$&1632&
${47}$&$(1,0,\vv,0,0,\ww,\vv,\ww)$&$(\vv,\ww,1,\vv,\ww,\vv,\vv,1)$&1824\\
${18}$&$(0,1,0,\vv,0,\ww,\vv,\vv)$&$(\ww,1,\ww,\vv,\ww,\vv,\vv,1)$&1632&
${48}$&$(1,\vv,0,0,0,1,\ww,1)$&$(\ww,0,\ww,\vv,\vv,0,\vv,1)$&1824\\
${19}$&$(1,0,\vv,0,0,\vv,0,1)$&$(\vv,0,1,0,1,0,\vv,1)$&1632&
${49}$&$(1,1,0,0,0,1,1,\ww)$&$(\ww,0,1,\vv,\vv,0,1,\ww)$&1824\\
${20}$&$(1,1,0,0,0,1,1,\ww)$&$(\vv,1,1,\vv,\ww,1,1,\ww)$&1632&
${50}$&$(0,0,0,0,0,1,1,0)$&$(\ww,1,\vv,\vv,\ww,0,1,\ww)$&1824\\
${21}$&$(0,1,0,0,0,1,\ww,\ww)$&$(\ww,1,\ww,\vv,\ww,0,1,\ww)$&1680&
${51}$&$(1,0,\vv,0,0,\vv,\vv,\vv)$&$(\ww,\ww,1,0,\vv,0,\vv,1)$&1920\\
${22}$&$(0,1,0,0,0,\vv,1,\ww)$&$(\ww,\ww,\vv,\vv,\vv,0,1,\ww)$&1680&
${52}$&$(1,1,0,0,0,\ww,1,0)$&$(\vv,1,0,1,0,0,1,\ww)$&1920\\
${23}$&$(1,1,0,0,0,\vv,1,0)$&$(\vv,\vv,\ww,\vv,\ww,0,1,\ww)$&1680&
${53}$&$(1,0,0,0,0,\ww,1,0)$&$(\ww,0,1,1,1,0,1,\ww)$&1920\\
${24}$&$(0,0,1,0,0,\ww,\ww,\ww)$&$(\vv,1,1,\ww,\ww,0,1,\ww)$&1680&
${54}$&$(1,0,0,0,0,\vv,1,0)$&$(\vv,\vv,1,\ww,1,1,1,\ww)$&1920\\
${25}$&$(1,0,1,0,0,\vv,\ww,1)$&$(1,\ww,1,1,\vv,1,1,\ww)$&1680&
${55}$&$(1,\vv,\vv,0,0,\ww,1,\ww)$&$(1,\vv,\vv,\ww,\vv,0,\vv,1)$&1920\\
${26}$&$(1,1,1,0,0,\vv,1,\ww)$&$(\ww,0,\vv,0,\ww,0,1,\ww)$&1680&
${56}$&$(1,\vv,\vv,0,0,1,\ww,0)$&$(1,1,\vv,1,\vv,\vv,\vv,1)$&1920\\
${27}$&$(1,\vv,\ww,1,0,1,\ww,\vv)$&$(\ww,1,1,1,1,1,1,\ww)$&1680&
${57}$&$(1,\vv,\ww,0,0,\vv,\ww,1)$&$(\vv,1,\ww,\ww,1,0,\ww,\vv)$&1920\\
${28}$&$(0,1,1,1,0,\ww,\vv,\ww)$&$(\ww,1,1,\ww,\vv,0,1,\ww)$&1776&
${58}$&$(0,1,1,0,0,\ww,\ww,0)$&$(\vv,\vv,1,1,\ww,0,1,\ww)$&1920\\
${29}$&$(1,\vv,1,0,0,\vv,1,1)$&$(\ww,0,\ww,0,1,0,1,\ww)$&1776&
${59}$&$(1,0,\vv,0,0,\vv,\vv,\vv)$&$(\vv,\ww,1,\vv,\ww,\vv,\vv,1)$&1968\\
${30}$&$(1,\vv,0,\vv,0,\vv,1,1)$&$(1,\ww,1,\ww,\ww,0,\vv,1)$&1776&
&&&\\
\noalign{\hrule height1pt}
\end{tabular}
}
\end{table}

\begin{table}[thbp]
\caption{Modified four $\ww$-circulant Hermitian self-dual $[32,16,10]$ codes $C_{32,\ww,i}$}
\label{Tab:32-2}
\centering
\medskip
{\footnotesize
\begin{tabular}{c|ll|c||c|ll|c}
\noalign{\hrule height1pt}
$i$ & \multicolumn{1}{c}{$r_A$} & \multicolumn{1}{c|}{$r_B$} & $A_{10}$&
$i$ & \multicolumn{1}{c}{$r_A$} & \multicolumn{1}{c|}{$r_B$}  & $A_{10}$ \\
\hline
${1}$&$(0,1,0,0,0,\vv,1,\vv)$&$(\vv,\vv,\vv,\ww,0,1,1,\ww)$&1200&
${31}$&$(1,1,1,0,0,1,\ww,0)$&$(1,\vv,0,\vv,0,1,1,\ww)$&1776\\
${2}$&$(0,1,1,0,0,\vv,\vv,\vv)$&$(0,1,1,\vv,\vv,0,1,\ww)$&1200&
${32}$&$(0,1,0,0,0,1,1,0)$&$(1,1,\vv,\ww,0,0,1,\ww)$&1776\\
${3}$&$(1,\ww,\ww,0,0,1,\ww,0)$&$(\vv,0,\ww,1,0,\vv,\ww,\vv)$&1200&
${33}$&$(0,1,0,0,0,\ww,1,\ww)$&$(\ww,\ww,0,1,1,1,1,\ww)$&1776\\
${4}$&$(1,1,0,0,0,\ww,1,1)$&$(1,\ww,0,\vv,1,0,1,\ww)$&1344&
${34}$&$(0,1,0,0,0,\vv,1,\vv)$&$(\ww,\ww,\vv,1,1,0,1,\ww)$&1776\\
${5}$&$(1,\vv,\vv,0,0,\ww,\ww,1)$&$(1,\vv,1,1,\ww,0,\vv,1)$&1344&
${35}$&$(1,1,0,1,0,\vv,\ww,0)$&$(1,\vv,\ww,\ww,1,1,1,\ww)$&1776\\
${6}$&$(0,1,0,1,0,\ww,\ww,1)$&$(\ww,\ww,\vv,\ww,1,1,1,\ww)$&1392&
${36}$&$(1,\vv,1,0,0,\vv,1,\ww)$&$(\vv,\vv,\vv,\vv,\vv,0,1,\ww)$&1776\\
${7}$&$(1,1,0,0,0,1,\vv,1)$&$(\vv,0,1,1,0,1,1,\ww)$&1392&
${37}$&$(0,1,1,0,0,1,1,\ww)$&$(\ww,1,\vv,1,1,1,1,\ww)$&1776\\
${8}$&$(1,1,0,1,0,\vv,0,1)$&$(\ww,1,1,\ww,1,1,1,\ww)$&1392&
${38}$&$(0,1,1,0,0,1,1,\ww)$&$(\ww,1,0,0,\vv,\ww,1,\ww)$&1776\\
${9}$&$(0,1,1,0,0,1,\vv,0)$&$(\ww,\vv,1,1,\ww,0,1,\ww)$&1392&
${39}$&$(0,1,1,0,0,1,1,\ww)$&$(1,\ww,\vv,1,0,0,1,\ww)$&1776\\
${10}$&$(0,1,0,1,0,\ww,\ww,1)$&$(\vv,1,0,\ww,1,0,1,\ww)$&1488&
${40}$&$(1,0,1,0,0,\vv,0,\vv)$&$(0,1,\ww,\ww,0,0,1,\ww)$&1776\\
${11}$&$(1,\ww,\ww,0,0,\ww,\vv,\vv)$&$(\vv,\ww,0,\vv,\vv,\ww,\ww,\vv)$&1488&
${41}$&$(1,0,0,0,0,1,\ww,\ww)$&$(\ww,1,\vv,\ww,1,0,1,\ww)$&1776\\
${12}$&$(1,0,1,0,0,\ww,1,\ww)$&$(\vv,0,\ww,\ww,\ww,0,1,\ww)$&1488&
${42}$&$(1,\ww,0,0,0,\vv,1,\vv)$&$(\vv,1,1,\vv,\vv,\ww,\ww,\vv)$&1776\\
${13}$&$(0,1,0,0,0,\vv,1,\vv)$&$(1,\vv,\vv,1,1,0,1,\ww)$&1488&
${43}$&$(1,1,0,0,0,\ww,1,1)$&$(1,\ww,1,\vv,1,1,1,\ww)$&1776\\
${14}$&$(0,1,0,0,0,\vv,\ww,0)$&$(\vv,\ww,\ww,\ww,1,1,1,\ww)$&1536&
${44}$&$(1,1,1,1,0,\vv,\ww,1)$&$(1,\ww,\vv,1,0,0,1,\ww)$&1776\\
${15}$&$(0,1,0,0,0,1,\vv,\ww)$&$(\ww,\vv,\ww,\ww,\ww,0,1,\ww)$&1632&
${45}$&$(0,1,0,1,0,1,1,\ww)$&$(1,\ww,1,\vv,\ww,\ww,1,\ww)$&1824\\
${16}$&$(0,1,0,1,0,\ww,\ww,1)$&$(\ww,1,\vv,\ww,0,0,1,\ww)$&1632&
${46}$&$(1,\ww,\vv,0,0,\vv,1,\ww)$&$(1,\vv,1,\vv,0,\vv,\vv,1)$&1824\\
${17}$&$(1,1,0,1,0,\ww,1,1)$&$(1,\vv,1,\ww,1,0,1,\ww)$&1632&
${47}$&$(1,0,1,0,0,\ww,\ww,1)$&$(\vv,\ww,1,\vv,\ww,1,1,\ww)$&1824\\
${18}$&$(0,1,0,1,0,1,1,\ww)$&$(\vv,\ww,\ww,\ww,1,1,1,\ww)$&1632&
${48}$&$(1,1,0,0,0,1,\vv,1)$&$(\vv,0,\ww,\ww,\ww,0,1,\ww)$&1824\\
${19}$&$(1,0,0,0,0,\vv,\vv,1)$&$(1,0,\ww,0,1,0,1,\ww)$&1632&
${49}$&$(1,\ww,0,0,0,\vv,1,\vv)$&$(1,0,\ww,\vv,\ww,0,\ww,\vv)$&1824\\
${20}$&$(1,\ww,0,0,0,\vv,1,\vv)$&$(\vv,\ww,1,\ww,\ww,\ww,\ww,\vv)$&1632&
${50}$&$(0,0,0,0,0,1,\ww,0)$&$(\vv,\ww,\vv,\ww,0,\ww,\ww,\vv)$&1824\\
${21}$&$(0,1,0,0,0,\ww,\ww,\vv)$&$(\ww,\ww,\ww,1,0,1,1,\ww)$&1680&
${51}$&$(1,0,1,0,0,1,\ww,\vv)$&$(\vv,\vv,\ww,0,\vv,0,1,\ww)$&1920\\
${22}$&$(0,1,0,0,0,1,\vv,\ww)$&$(1,\vv,1,\ww,1,0,1,\ww)$&1680&
${52}$&$(1,1,0,0,0,\vv,1,0)$&$(1,\vv,0,0,1,0,1,\ww)$&1920\\
${23}$&$(1,1,0,0,0,\vv,1,0)$&$(\vv,\ww,1,\ww,\vv,0,1,\ww)$&1680&
${53}$&$(1,0,0,0,0,\ww,1,0)$&$(\vv,0,\ww,\vv,1,0,1,\ww)$&1920\\
${24}$&$(0,0,1,0,0,\ww,\vv,1)$&$(\vv,1,1,1,1,0,1,\ww)$&1680&
${54}$&$(1,0,0,0,0,\ww,1,0)$&$(\vv,\ww,1,1,1,1,1,\ww)$&1920\\
${25}$&$(1,0,1,0,0,\vv,1,1)$&$(1,\ww,\vv,\ww,1,1,1,\ww)$&1680&
${55}$&$(1,\vv,\vv,0,0,\ww,\vv,\vv)$&$(\ww,\vv,\ww,0,1,\vv,\vv,1)$&1920\\
${26}$&$(1,1,1,0,0,\ww,\vv,1)$&$(\vv,0,\vv,0,\vv,0,1,\ww)$&1680&
${56}$&$(0,1,1,0,0,1,1,\ww)$&$(\ww,\ww,1,1,\vv,1,1,\ww)$&1920\\
${27}$&$(1,1,1,1,0,\vv,\ww,1)$&$(1,\ww,\vv,1,\vv,1,1,\ww)$&1680&
${57}$&$(1,1,1,0,0,\ww,\ww,\ww)$&$(\ww,1,\ww,1,0,1,1,\ww)$&1920\\
${28}$&$(1,\ww,\ww,\ww,0,\vv,\vv,1)$&$(\ww,0,\vv,\vv,\ww,0,\ww,\vv)$&1776&
${58}$&$(0,1,1,0,0,1,\vv,0)$&$(\ww,1,\ww,1,0,1,1,\ww)$&1920\\
${29}$&$(1,\vv,\vv,0,0,1,1,\ww)$&$(\ww,0,1,0,1,0,\vv,1)$&1776&
${59}$&$(1,0,1,0,0,1,\ww,\vv)$&$(\vv,\ww,1,\vv,\ww,1,1,\ww)$&1968\\
${30}$&$(1,1,0,1,0,\ww,1,1)$&$(\ww,\vv,\ww,\ww,\ww,0,1,\ww)$&1776&
&&&\\
\noalign{\hrule height1pt}
\end{tabular}
}
\end{table}

\begin{table}[thbp]
\caption{Modified four $\vv$-circulant Hermitian self-dual $[32,16,10]$ codes $C_{32,\vv,i}$}
\label{Tab:32-3}
\centering
\medskip
{\footnotesize
\begin{tabular}{c|ll|c||c|ll|c}
\noalign{\hrule height1pt}
$i$ & \multicolumn{1}{c}{$r_A$} & \multicolumn{1}{c|}{$r_B$} & $A_{10}$&
$i$ & \multicolumn{1}{c}{$r_A$} & \multicolumn{1}{c|}{$r_B$}  & $A_{10}$ \\
\hline
${1}$&$(0,1,0,0,0,\ww,1,\ww)$&$(1,1,1,\vv,\vv,0,1,\ww)$&1200&
${31}$&$(0,1,1,0,0,\vv,\ww,\vv)$&$(\ww,\ww,\ww,0,\ww,0,1,\ww)$&1776\\
${2}$&$(0,1,1,0,0,\ww,\ww,\ww)$&$(1,0,\vv,\ww,0,1,1,\ww)$&1200&
${32}$&$(0,1,0,0,0,1,1,0)$&$(\ww,\ww,1,\vv,0,0,1,\ww)$&1776\\
${3}$&$(0,1,1,0,0,\ww,\ww,\ww)$&$(\ww,\ww,\vv,1,0,0,1,\ww)$&1200&
${33}$&$(0,1,0,0,0,\vv,\vv,\ww)$&$(1,\vv,1,\ww,\vv,0,1,\ww)$&1776\\
${4}$&$(1,\ww,\ww,0,0,\ww,\vv,\vv)$&$(0,\vv,\vv,0,1,0,\ww,\vv)$&1344&
${34}$&$(0,1,0,0,0,1,\vv,\vv)$&$(\vv,\ww,\vv,\vv,\ww,0,1,\ww)$&1776\\
${5}$&$(1,\ww,\ww,0,0,\vv,\vv,1)$&$(\vv,\ww,\ww,1,0,\vv,\ww,\vv)$&1344&
${35}$&$(1,1,0,0,0,1,\vv,\ww)$&$(1,\ww,\vv,1,1,1,1,\ww)$&1776\\
${6}$&$(0,1,0,1,0,\vv,\vv,1)$&$(\ww,1,1,\ww,1,1,1,\ww)$&1392&
${36}$&$(1,1,\vv,0,0,\ww,1,\vv)$&$(1,\ww,\vv,1,\vv,0,\vv,1)$&1776\\
${7}$&$(1,1,0,0,0,\ww,\vv,\vv)$&$(\vv,1,0,\vv,1,0,1,\ww)$&1392&
${37}$&$(0,1,1,0,0,\ww,1,\ww)$&$(\vv,\vv,\ww,\ww,1,\ww,1,\ww)$&1776\\
${8}$&$(1,1,0,1,0,\ww,0,1)$&$(\vv,\vv,1,\ww,1,1,1,\ww)$&1392&
${38}$&$(0,1,1,0,0,1,1,\vv)$&$(\ww,1,\ww,\vv,0,0,1,\ww)$&1776\\
${9}$&$(0,1,1,0,0,1,\ww,0)$&$(\vv,\ww,\ww,1,0,\ww,1,\ww)$&1392&
${39}$&$(1,\vv,\ww,0,0,\ww,1,\vv)$&$(1,0,1,\ww,0,0,\ww,\vv)$&1776\\
${10}$&$(1,1,1,0,0,\ww,\vv,\ww)$&$(\ww,0,\vv,0,1,0,1,\ww)$&1488&
${40}$&$(1,0,0,0,0,\vv,\ww,\vv)$&$(\ww,0,1,\ww,0,0,1,\ww)$&1776\\
${11}$&$(1,\vv,\vv,0,0,\vv,\ww,\ww)$&$(1,\vv,\vv,1,1,0,\vv,1)$&1488&
${41}$&$(1,0,0,0,0,1,\vv,\vv)$&$(\ww,\vv,1,\ww,1,0,1,\ww)$&1776\\
${12}$&$(1,0,\vv,0,0,1,1,\vv)$&$(1,0,\ww,\vv,\ww,0,\vv,1)$&1488&
${42}$&$(1,\vv,0,0,0,\ww,1,1)$&$(\vv,1,1,\vv,1,\vv,\vv,1)$&1776\\
${13}$&$(0,1,0,0,0,1,\vv,\vv)$&$(\vv,1,1,\vv,\ww,0,1,\ww)$&1488&
${43}$&$(1,1,0,0,0,\vv,1,1)$&$(\ww,1,\vv,\ww,\ww,\ww,1,\ww)$&1776\\
${14}$&$(0,1,0,0,0,\ww,\vv,0)$&$(1,1,\ww,\ww,1,1,1,\ww)$&1536&
${44}$&$(1,\vv,\ww,0,0,\ww,1,\vv)$&$(\vv,1,1,1,1,0,\ww,\vv)$&1776\\
${15}$&$(0,1,0,0,0,\ww,\ww,\ww)$&$(\ww,\ww,1,1,\vv,0,1,\ww)$&1632&
${45}$&$(0,1,0,1,0,\vv,\ww,\ww)$&$(\ww,\vv,\ww,1,\vv,1,1,\ww)$&1824\\
${16}$&$(1,1,1,0,0,\vv,\ww,\ww)$&$(1,0,\ww,0,1,0,1,\ww)$&1632&
${46}$&$(1,\vv,\ww,0,0,\ww,1,\vv)$&$(1,\ww,\vv,\ww,\ww,0,\ww,\vv)$&1824\\
${17}$&$(1,1,0,1,0,\vv,1,1)$&$(\vv,\ww,\vv,\ww,1,0,1,\ww)$&1632&
${47}$&$(1,0,1,0,0,\vv,\vv,1)$&$(\ww,\vv,1,\ww,\ww,1,1,\ww)$&1824\\
${18}$&$(0,1,0,1,0,\vv,\ww,\ww)$&$(1,1,\ww,\ww,1,1,1,\ww)$&1632&
${48}$&$(1,1,0,0,0,\ww,\vv,\vv)$&$(\ww,0,\vv,1,\vv,0,1,\ww)$&1824\\
${19}$&$(1,0,1,0,0,1,0,\vv)$&$(1,0,1,0,\ww,0,1,\ww)$&1632&
${49}$&$(1,\vv,0,0,0,\ww,1,1)$&$(\ww,0,1,1,\vv,0,\vv,1)$&1824\\
${20}$&$(1,\vv,0,0,0,\ww,1,1)$&$(1,\ww,1,1,1,1,\vv,1)$&1632&
${50}$&$(0,0,0,0,0,1,\vv,0)$&$(\ww,\vv,1,\vv,\vv,0,\vv,1)$&1824\\
${21}$&$(0,1,0,0,0,\vv,\vv,\ww)$&$(1,1,\ww,1,1,0,1,\ww)$&1680&
${51}$&$(1,0,1,0,0,1,\vv,\ww)$&$(\vv,\vv,\vv,0,1,0,1,\ww)$&1920\\
${22}$&$(0,1,0,0,0,1,\ww,\vv)$&$(\vv,\ww,\vv,\ww,1,0,1,\ww)$&1680&
${52}$&$(1,1,0,0,0,\ww,1,0)$&$(\ww,\vv,0,1,0,0,1,\ww)$&1920\\
${23}$&$(1,1,0,0,0,\ww,1,0)$&$(1,\ww,1,1,1,0,1,\ww)$&1680&
${53}$&$(1,0,0,0,0,\vv,1,0)$&$(\vv,0,1,1,\ww,0,1,\ww)$&1920\\
${24}$&$(0,0,1,0,0,1,1,\ww)$&$(\ww,1,1,\vv,1,0,1,\ww)$&1680&
${54}$&$(1,0,0,0,0,\vv,1,0)$&$(\ww,\ww,1,1,1,1,1,\ww)$&1920\\
${25}$&$(1,0,1,0,0,\ww,1,1)$&$(\ww,1,\vv,\ww,1,1,1,\ww)$&1680&
${55}$&$(1,\ww,\ww,0,0,\vv,\ww,\ww)$&$(1,1,\vv,\vv,1,0,\ww,\vv)$&1920\\
${26}$&$(1,\vv,\ww,0,0,\vv,1,\ww)$&$(1,0,1,0,\vv,0,\ww,\vv)$&1680&
${56}$&$(0,1,1,0,0,1,1,\vv)$&$(1,\ww,1,1,\vv,1,1,\ww)$&1920\\
${27}$&$(1,1,1,1,0,\ww,\vv,1)$&$(\vv,\ww,1,\vv,\ww,\ww,1,\ww)$&1680&
${57}$&$(1,1,1,0,0,\vv,\vv,\vv)$&$(\vv,1,\ww,1,1,0,1,\ww)$&1920\\
${28}$&$(1,1,1,1,0,1,\vv,\vv)$&$(1,0,\vv,1,1,0,1,\ww)$&1776&
${58}$&$(0,1,1,0,0,1,\ww,0)$&$(\vv,1,\ww,1,1,0,1,\ww)$&1920\\
${29}$&$(1,\vv,1,0,0,\ww,\vv,\vv)$&$(\vv,0,1,0,\vv,0,1,\ww)$&1776&
${59}$&$(1,0,1,0,0,1,\vv,\ww)$&$(\ww,\vv,1,\ww,\ww,1,1,\ww)$&1968\\
${30}$&$(1,1,0,1,0,\vv,1,1)$&$(\ww,\ww,1,1,\vv,0,1,\ww)$&1776&
&&&\\
\noalign{\hrule height1pt}
\end{tabular}
}
\end{table}

\begin{table}[th]
\caption{Modified four $\mu$-circulant Hermitian self-dual codes with large minimum weights}
\label{Tab:L}
\centering
\medskip
{\footnotesize
\begin{tabular}{c|c|ll}
\noalign{\hrule height1pt}
Code & $d$ & \multicolumn{1}{c}{$r_A$} & \multicolumn{1}{c}{$r_B$}   \\
\hline
$C_{48,1}$   &14&$(1,1,0,1,\ww,\ww,0,\vv,\vv,1,1,0)$&$(\vv,1,\vv,1,\vv,1,\vv,\vv,\vv,\vv,1,\vv)$ \\
$C_{48,\ww}$ &14&$(0,1,\vv,1,\vv,\ww,\vv,1,\vv,\vv,\vv,0)$&$(\ww,0,\vv,0,1,\ww,\vv,0,1,\vv,\vv,\ww)$ \\
$C_{48,\vv}$ &14&$(0,0,1,1,\vv,\ww,\ww,0,0,\vv,\vv,1)$&$(0,\vv,1,\vv,\vv,\vv,\vv,0,\ww,0,1,\vv)$ \\
$C_{52,1}$   &14&$(1,0,\vv,\vv,1,\vv,1,1,\ww,0,\ww,\ww,1)$&$(\ww,\vv,0,1,\vv,1,\ww,0,1,\ww,1,0,1)$ \\
$C_{52,\ww}$ &14&$(1,0,1,\vv,0,1,1,0,0,\ww,1,0,1)$&$(\vv,\vv,0,1,0,\vv,1,\ww,1,1,\vv,1,\vv)$ \\
$C_{52,\vv}$ &14&$(1,0,\vv,\vv,1,\vv,1,1,\ww,0,\ww,\ww,1)$&$(\vv,0,0,\ww,0,1,1,0,1,\ww,1,0,1)$ \\
$C_{56,1}$   &16&$(0,1,0,0,\ww,\ww,\ww,\vv,1,0,\vv,\ww,\ww,\vv)$&$(1,1,0,0,\vv,\vv,\ww,\vv,\ww,0,\ww,1,1,1)$ \\
$C_{56,\ww}$ &16&$(1,\ww,\vv,0,1,\vv,\ww,1,\ww,0,\vv,\vv,\ww,\vv)$&$(0,0,\vv,\ww,1,0,\vv,0,0,\vv,\ww,\vv,\ww,\vv)$ \\
$C_{56,\vv}$ &14&$(1,\ww,\ww,0,0,\ww,1,0,1,0,1,0,0,\ww)$&$(\ww,0,1,\ww,\ww,0,0,0,0,\ww,\vv,\ww,1,\ww)$ \\
$C_{60,1}$   &16&$(1,0,\ww,1,\ww,\vv,0,1,1,1,\vv,0,1,1,\vv)$&$(\vv,0,\ww,1,0,\ww,\vv,\ww,\vv,0,0,\vv,\vv,\vv,\vv)$ \\
$C_{60,\ww}$ &16&$(1,\vv,1,0,0,0,\vv,\ww,0,\ww,\vv,1,\vv,\ww,\ww)$&$(\vv,0,\ww,0,1,\ww,\ww,0,\vv,1,1,\ww,\vv,1,1)$ \\
$C_{60,\vv}$ &16&$(1,\vv,1,0,0,0,\vv,\ww,0,\ww,\vv,1,\vv,\ww,\ww)$&$(\vv,\ww,1,0,0,\ww,0,0,\ww,0,1,\ww,\vv,1,1)$ \\
$C_{64,1}$   &16&$(1,\vv,\vv,\vv,\vv,0,1,1,1,1,\ww,\vv,\ww,\ww,\ww,\ww)$&$(1,\ww,0,\vv,\ww,1,1,1,\vv,\vv,\vv,\ww,0,1,\ww,\ww)$ \\
$C_{64,\ww}$ &16&$(1,\ww,0,1,1,1,\ww,\vv,\ww,\vv,0,1,0,1,\vv,0)$&$(0,1,1,1,1,0,\vv,\vv,\vv,1,1,\vv,0,1,\vv,1)$ \\
$C_{64,\vv}$ &16&$(1,\ww,0,1,1,1,\ww,\vv,\ww,\vv,0,1,0,1,\vv,0)$&$(0,1,1,1,1,0,\vv,\vv,1,0,\vv,1,\ww,\ww,\ww,\vv)$ \\
$C_{68,1}$   &18&
$(1,0,\vv,\ww,\ww,\vv,1,\vv,0,\vv,0,1,1,1,0,\ww,\vv)$&
$(\ww,1,0,1,\vv,1,\vv,\vv,\vv,\ww,\ww,0,\vv,\ww,\vv,0,\vv)$\\
$C_{68,\ww}$ &18&$(1,\vv,\ww,\ww,\vv,0,1,\vv,\vv,\vv,\ww,0,\vv,\ww,\vv,0,0)$&$(\vv,1,0,1,\ww,\vv,\vv,\ww,1,0,\ww,0,1,0,\ww,\ww,0)$ \\
$C_{68,\vv}$ &18&$(0,1,\ww,0,\ww,0,\ww,\ww,1,\vv,\ww,\ww,\ww,\ww,\vv,1,\vv)$&
$(0,1,\ww,\vv,\ww,0,0,1,1,\vv,1,\vv,1,\vv,\ww,0,\ww)$\\
$C_{72,1}$&18&$(1,\vv,0,\ww,0,\ww,\vv,0,0,0,\ww,\vv,0,\vv,0,\ww,\vv,\ww)$&$(\vv,\ww,0,1,0,\ww,1,0,1,0,\ww,1,1,1,0,0,1,1)$\\
$C_{72,\ww}$&18&$(1,0,\ww,\vv,\ww,0,0,\ww,\ww,0,\vv,1,1,1,1,\ww,1,\vv)$&$(\vv,\vv,1,\vv,1,1,\vv,\ww,\vv,0,\ww,0,1,\ww,1,1,1,0)$\\
$C_{72,\vv}$&18&$(0,1,1,\ww,0,1,\ww,\ww,\ww,1,\vv,1,\vv,1,\ww,0,\ww,0)$&$(1,\vv,\vv,\vv,\vv,0,\ww,0,0,0,\ww,0,1,\vv,\vv,0,1,0)$\\
$C_{76,1}$&18&$(0,1,\vv,1,\vv,0,0,\ww,1,1,\vv,\ww,\ww,\vv,1,\ww,0,\vv,\vv)$&$(0,\ww,1,0,0,\vv,1,\vv,1,0,0,1,\ww,0,\vv,0,\vv,0,0)$\\
$C_{76,\ww}$&18&$(0,1,\vv,\ww,0,1,\ww,1,\vv,\vv,0,0,1,\vv,0,\ww,1,\vv,\vv)$&$(\vv,1,1,1,0,\ww,1,\vv,\vv,1,\vv,0,1,1,1,0,\vv,\ww,0)$\\
$C_{76,\vv}$&18&$(0,1,\ww,\vv,\vv,1,0,0,1,1,0,1,0,\ww,1,1,\vv,\ww,1)$&$(1,\vv,1,1,1,0,0,1,\ww,\ww,0,\ww,1,\vv,\vv,0,1,0,0)$\\
$C_{80,1}$&20&
$(1,\vv,0,\ww,\vv,\vv,\vv,1,\ww,0,\ww,1,1,0,\vv,1,1,\ww,\ww,0)$&
$(1,1,0,0,\vv,1,1,\ww,\vv,0,0,0,\vv,1,\ww,\vv,\vv,1,1,\ww)$\\
$C_{80,\ww}$&20&
$(1,0,\vv,0,\ww,0,1,\vv,\ww,1,\ww,\vv,\vv,\vv,\ww,\vv,\vv,\ww,0,0)$&
$(0,0,\ww,0,1,0,\vv,\vv,\ww,\vv,\vv,\ww,0,\vv,0,\vv,\ww,\vv,\vv,1)$\\
$C_{80,\vv}$&20&
$(0,1,\vv,\ww,\vv,1,\vv,\vv,0,1,\ww,0,\ww,0,1,1,\ww,\vv,1,\ww)$&
$(1,1,\vv,\vv,\ww,\vv,0,0,1,1,\vv,\ww,0,\vv,1,0,1,0,\ww,\ww)$\\
$C_{84,\ww}$&20&
$(1,\ww,\vv,\vv,0,0,1,\vv,1,1,\ww,\ww,\vv,0,1,0,\ww,1,1,\ww,1)$&
$(\ww,0,0,0,0,\ww,\vv,0,0,1,1,1,\ww,0,0,\vv,0,0,0,0,0)$
\\
$C_{88,\ww}$&20&
$(0,0,1,0,0,\vv,\vv,\ww,\vv,\vv,0,\ww,1,1,\vv,\ww,\vv,1,\vv,\vv,\vv,0)$&
$(0,0,0,\vv,\ww,\vv,\vv,\ww,\vv,0,1,\vv,\ww,\ww,0,\vv,0,0,0,0,0,0)$
\\
$C_{96,\ww}$&22&
$(1,1,1,\vv,1,\vv,1,0,0,0,0,0,1,0,\ww,\vv,1,0,\vv,\ww,0,\vv,\ww,1)$&
$(1,0,\vv,\ww,\vv,0,1,\vv,0,1,0,\ww,\vv,1,\ww,1,\ww,0,1,\vv,1,1,1,\ww)$
\\
\noalign{\hrule height1pt}
\end{tabular}
}
\end{table}

\end{landscape}

\end{document}